%% file: main.tex
\documentclass[aps,prx,10pt,onecolumn,showpacs,citeautoscript,amsmath,amssymb,floatfix,nofootinbib,superscriptaddress,longbibliography]{revtex4-2}

\input{Preamble.tex}
\begin{document}

\title{Quantum incompatibility of Born probabilities}

\author{Esteban Castro-Ruiz}
    \email{esteban.castroruiz@oeaw.ac.at}
    \affiliation{Institut für Quantenoptik und Quanteninformation (IQOQI), Austrian Academy of Sciences, Boltzmanngasse 3, 1090 Vienna, Austria}
    
\author{Nathan Cohen}
    \email{nathan.cohen@univie.ac.at}
    \affiliation{Institut für Quantenoptik und Quanteninformation (IQOQI), Austrian Academy of Sciences, Boltzmanngasse 3, 1090 Vienna, Austria}
    \affiliation{University of Vienna, Faculty of Physics, Vienna Doctoral School in Physics, and Vienna Center for Quantum Science and Technology (VCQ), Boltzmanngasse 5, A-1090 Vienna, Austria}
\author{Luis C. Barbado}
    \email{luis.cortes.barbado@univie.ac.at}
    \affiliation{University of Vienna, Faculty of Physics, Vienna Doctoral School in Physics, and Vienna Center for Quantum Science and Technology (VCQ), Boltzmanngasse 5, A-1090 Vienna, Austria}
\author{{{\v{C}}}aslav Brukner}
 \email{caslav.brukner@univie.ac.at}
    \affiliation{Institut für Quantenoptik und Quanteninformation (IQOQI), Austrian Academy of Sciences, Boltzmanngasse 3, 1090 Vienna, Austria}
    \affiliation{University of Vienna, Faculty of Physics, Vienna Doctoral School in Physics, and Vienna Center for Quantum Science and Technology (VCQ), Boltzmanngasse 5, A-1090 Vienna, Austria}

\begin{abstract}
\noindent Quantum theory challenges the view that individual measurement outcomes are predefined and independent of the measurement context. Yet the quantum state itself -- the catalogue of probabilities for all possible measurements -- is usually assumed to be well defined. We argue that this assumption tacitly relies on measurements being performed relative to ideal, infinitely-resourceful reference frames. We show that, when measurements are made relative to non-ideal quantum reference frames, the probabilities themselves become indefinite: even in the limit of arbitrarily large number of runs, the relative frequencies may remain uncertain. The uncertainty is irreducible in a quantum-mechanical sense, as we show by proving a Bell-type theorem for relative frequencies. We further propose a quantum-optical implementation of these relational measurements based on pulsed homodyne detection. Our findings motivate an extension of the notion of the quantum state to regimes constrained by finite resources. We expect them to be especially relevant at the interface between quantum theory and general relativity, where the resources and information available in a bounded region of spacetime are fundamentally limited.
\end{abstract}

\maketitle

\tableofcontents

\section{Introduction}
\noindent Quantum theory (QT) is a fundamentally probabilistic theory: rather than predicting individual measurement outcomes, it predicts their probabilities. Unlike classical probabilistic theories, QT challenges the view that these probabilities merely reflect our ignorance of all the variables relevant to the experiment. This challenge is supported by \emph{quantum contextuality}, established for single systems by the Kochen--Specker theorem \cite{KochenSpecker1967} and extended by Bell's theorem \cite{Bell1964} to \emph{quantum nonlocality} for spacelike separated measurements. One possible reading of these theorems is that, in QT, the outcomes of all possible measurements are not jointly well-defined; that is, there exists no joint probability distribution for all of them. Consequently, outcome values are not predefined prior to, and independent of the measurement context. For example, if a Stern--Gerlach (SG) device yields the outcome of a spin measurement along the \( z \)-axis, one cannot simultaneously assign a definite value to the the spin along the \(x \)-axis.

Despite the incompatibility of measurement outcomes, QT predicts well-defined and compatible probability distributions for all possible measurements. In the example above, the probabilities of spin measurements along arbitrary directions are jointly well defined, even though only measurements along the $z$-axis have definite outcomes in the experimental context considered. This mutual compatibility of probability assignments is a central structural feature of QT. Indeed, a quantum state may be viewed as a compact representation of well-defined probabilities for all possible measurements. In this spirit, Schr\"{o}dinger called quantum states an ``\emph{Erwartungskatalog}'' -- a ``\emph{catalogue of expectations}'' \cite{Schrodinger1935}.

Operationally, the connection between how probabilities are computed in quantum mechanics and how they are statistically inferred in real experiments relies on two idealizations. First, one typically assumes that an experiment can be repeated many times under the same conditions. Such repeatability motivates treating the resulting sequence of systems as independent and identically distributed (i.i.d.), and implies that the observed relative frequencies converge, in the limit of arbitrarily many runs, to the probabilities given by the Born rule \cite{Finkelstein1963, Hartle1968, FarhiGoldstoneGutmann1989, Herbut2015prob}. The assumption reflects standard experimental practice, where relative frequencies collected over repeated runs are compared with the theory's predictions, while remaining neutral on the interpretation of probability---frequentist, Bayesian, or otherwise. Second, measurements in QT are usually performed relative to a classical reference frame, which the formalism implicitly assumes to be ideal and therefore does not model it explicitly. Ideal frames define perfectly distinguishable ``orientations'' and are unaffected by the system under consideration, ensuring that the experiment is performed under the same conditions. The second assumption thus underpins the feasibility of the first: in our SG experiment, the magnet is  aligned along a direction defined by the frame, which the frame distinguishes perfectly and remains stable across repeated runs.

However, reference frames are physical systems like any other. In particular, they obey the laws of quantum theory, and one may reasonably expect their quantum properties to become relevant in certain situations. Introducing quantum reference frames (QRFs) lets one describe reference frames explicitly and even allow them to be in genuinely quantum states, such as superpositions or states entangled with the system~\cite{aharonov1967charge,aharonov1984quantum, Giacomini2019,Vanrietvelde_2020,de_la_Hamette_2020,Krumm2021,de2021perspective,carette2025operational, castro2025relative}. Only in the limit where QRFs become ideal and classical-like -- sharply localized in the relevant frame degrees of freedom -- does one recover the standard QT description~\cite{BartlettRudolphSpekkens2007, loveridge2012quantum, loveridge2018symmetry}. Away from this limit, non-ideal frames in genuinely quantum states can give rise to new phenomena~\cite{Bartlett_2006, Poulin_2006,PoulinYard_2007,Miyadera_2016,MikuschEtAl2021_QRFspin,de2021perspective, garmier2025perspectives}. For example, bounded reference frames have been shown to limit the detectability of entanglement and Bell nonlocality~\cite{CostaHarriganRudolphBrukner2009}.

Here we show that, when quantum reference frames are non-ideal, \emph{probabilities in QT can themselves become indefinite and context-dependent}. Constructing relative frequency operators with respect to non-ideal QRFs, we show that the operators associated with two different measurements need not commute, even in the limit of arbitrarily large runs. The corresponding relative frequencies can therefore be incompatible, inheriting the characteristic features of incompatible observables in QT. In particular, they satisfy Heisenberg-type trade-offs: the more sharply the relative frequency of one measurement is defined, the less sharply defined it becomes for an incompatible one. We then consider a Bell scenario in which two observers, Alice and Bob, measure relative frequency operators defined with respect to their non-ideal QRFs. Even in the limit of arbitrarily large number of runs, the correlations between their relative frequencies can violate a Bell inequality. Thus the uncertainty in the relative frequencies accessible in local laboratories cannot be interpreted as ignorance about the ``true probabilities''; rather, it reflects a fundamental impossibility of performing experimental runs under the same conditions, and thus of assigning well-defined probabilities to all conceivable measurements. 
These effects admit a concrete realization in pulsed continuous-variable optics, where the quantum reference frame is a single optical mode and the relational quadrature of an $N$-pulse block is read out by homodyne detection, opening a route to testing the associated Bell inequality.

Our results call into question the standard operational notion of a ``quantum state'' when reference frames are non-ideal, that is, when macroscopic, effectively classical reference frames are unavailable. They suggest that, for each measurement, the probability entering the standard definition of a quantum state may need to be replaced by a probability distribution over possible relative frequencies in the idealized limit of many repetitions---giving rise to a notion of a ``\emph{probability of probabilities}''. The corresponding adapted notion of a quantum state would then encode all these higher-order probability assignments.

These considerations are foundationally motivated by the interplay between quantum theory and general relativity. Indeed, the resources necessary to establish reference frames necessarily lead to gravitational back-action, placing fundamental constraints on the measurability of spacetime quantities~\cite{Padmanabhan1987,NgVanDam1994,Bronstein:1936,PNASpaper-Esteban-Caslav}. Similar motivations have led to the idea that quantum gravity may also require the information-geometric structures underlying the Born rule to be generalized rather than treated as fixed, potentially allowing departures from its standard quadratic form~\cite{MinicTze2003,BerglundEtAl2022Gravitizing,BerglundEtAl2025InformationMetrics}. Relatedly, the framework of Doubly Quantum Mechanics promotes the parameters of symmetry transformations to quantum-mechanical operators, thereby replacing ordinary probabilities by probability operators~\cite{DEsposito2025DoublyQuantum, d2026indefinite}. Finally, given that only finite resources and information are accessible within a bounded spacetime region \cite{Bekenstein1981,Mueller2009FuzzyProbability}, the operational meaning of a quantum state in such regimes is not guaranteed. In this context, our results provide a clear operational framework to investigate these questions, suggesting that the core probabilistic structure of QT may need to be revisited at the interface between our two fundamental theories.

\section{The relative frequency operator}\label{subsec:herbut}
\label{sec:therelativefrequencyoperator}
\noindent Consider an experiment designed to test the predictions of quantum theory. Suppose Alice, an experimenter, wants to verify that the probability of a certain outcome in her experiment is the one given by quantum theory. To do so, she typically runs the experiment on a large number of i.i.d. systems, records the outcomes, and checks whether the observed statistics agree with the theoretical prediction to a given level of confidence.

We now describe this scenario within the formalism of quantum theory. The system of interest, denoted $\mathsf a$, is associated with a Hilbert space $\mathcal H_{\mathsf a}$, and the outcome Alice is interested in is represented by a projection operator $P$ on $\mathcal H_{\mathsf a}$. 
Alice's system is prepared in $N$ i.i.d. copies of a given state $\ket{\psi} \in \mathcal H_{\mathsf a}$, which we write as $\vert \psi_N\rangle:=\vert \psi\rangle^{\otimes N }\in  \mathcal H_{\mathsf a^N}:=\mathcal H_{\mathsf a}^{\otimes N}$. (Here and throughout, we omit the subspace indices whenever there is no risk of confusion.) 
The operator associated with collecting statistics from the $N$ copies is the \emph{relative frequency operator} for the projector $P$, and it is defined as
\be\label{eq:relativefrequencyoperator}
    F_N(P) = \frac{1}{N}\sum_kP^k, \q\q \text{with} \q\q P^k:=\id\otimes\ldots\otimes\id\otimes\overset{\substack{k^{\text{th}} \\ \downarrow}}{P}\otimes\id\otimes\ldots\otimes\id,
\ee
where $\id$ denotes the identity operator on $\mathcal H_{\mathsf a}$. 

To gain some insight into the physical meaning of $F_N(P)$, we write its spectral decomposition as~\cite{Herbut2015prob} 
\begin{equation}
F_N(P) = \sum_{k = 0}^N \Big( \frac{k}{N}\Big) \, Q_{k/N}, \q \q \text{with} \q \q Q_{k/N}=\underbrace{P\otimes\ldots\otimes P}_{k} \otimes \underbrace{P^\perp\otimes ... \otimes P^\perp}_{N-k} + \,  \text{permutations},
\end{equation}
where, $P_\perp = \id - P$. The eigenvalues $\nu = k/N$ are relative frequencies, counting the number of positive outcomes of the quantum test corresponding to $P$ over the total number of trials $N$. 
 For any $N$-copy state $\ket{\psi_N}$, the expectation value of the frequency operator yields
\be
    \langle F_N(P)\rangle_{\psi_N}=\bra{\psi}P\ket{\psi}. 
\ee
Note that \(\vert \psi_N\rangle\) is not, in general, an eigenstate of \(F_N(P)\) for finite \(N\). However, it becomes an approximate eigenstate in the large-\(N\) limit, with asymptotic eigenvalue $\langle\psi\vert P\vert \psi\rangle$. More precisely, the following theorem was proven in Refs.~\cite{Finkelstein1963,Hartle1968,Herbut2015prob}: 

\begin{theorem}[Finkelstein-Hartle-Herbut]\label{theorem: Finkelstein-Hartle-Herbut}
    Given an $N$-copy state $\vert \psi_N\rangle=\vert \psi\rangle^{\otimes N}$ and a projector $P$, the mean value of the frequency operator is $\langle\psi_N\vert F_N(P)\vert \psi_N\rangle=p$ where $p=\langle\psi\vert P\vert \psi\rangle$, and the variance of the frequency operator will vanish for large enough N:
    \be
        \lim_{N\rightarrow\infty} \ls \Big( F_N(P) - p \Big) \vert \psi_N\rangle \rs\longrightarrow 0.
    \ee
\end{theorem}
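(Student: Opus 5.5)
The plan is to compute the squared norm $\ls (F_N(P)-p)\ket{\psi_N}\rs^2$ exactly and show that it equals $p(1-p)/N$, which tends to zero. Both the mean and the variance follow from expanding $F_N(P)=\frac1N\sum_k P^k$ and using the product structure of $\ket{\psi_N}=\ket{\psi}^{\otimes N}$.

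\emph{Mean.} Each $P^k$ acts nontrivially only on the $k$-th factor, and $\ket\psi$ is normalized, so
\begin{equation*}
\langle\psi_N|P^k|\psi_N\rangle=\langle\psi|P|\psi\rangle=p .
\end{equation*}
Averaging over $k$ gives $\langle F_N(P)\rangle_{\psi_N}=p$. This is the first claim of Theorem~\ref{theorem: Finkelstein-Hartle-Herbut}.

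\emph{Variance.} Because $F_N(P)$ is Hermitian and $p$ is real,
\begin{equation*}
\ls (F_N(P)-p)\ket{\psi_N}\rs^2=\langle F_N(P)^2\rangle_{\psi_N}-2p\langle F_N(P)\rangle_{\psi_N}+p^2=\langle F_N(P)^2\rangle_{\psi_N}-p^2 .
\end{equation*}
Next, expand
\begin{equation*}
F_N(P)^2=\frac{1}{N^2}\sum_{k,l}P^kP^l
\end{equation*}
and split the double sum into diagonal and off-diagonal terms.

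For $k=l$, idempotency gives $(P^k)^2=P^k$, so each of these $N$ terms has expectation $p$. For $k\neq l$, the operators act on different tensor factors, so the expectation factorizes as $\langle\psi|P|\psi\rangle^2=p^2$; there are $N(N-1)$ such terms. Hence
\begin{equation*}
\langle F_N(P)^2\rangle_{\psi_N}=\frac{Np+N(N-1)p^2}{N^2}=p^2+\frac{p(1-p)}{N},
\end{equation*}
and so
\begin{equation*}
\ls (F_N(P)-p)\ket{\psi_N}\rs^2=\frac{p(1-p)}{N}\le\frac{1}{4N}\longrightarrow 0 .
\end{equation*}

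There is no real obstacle here. The only step needing care is the factorization for $k\neq l$, which relies on the i.i.d. product form of $\ket{\psi_N}$; everything else is bookkeeping. As a cross-check, one can use the spectral decomposition of $F_N(P)$ given above. The weights $\langle\psi_N|Q_{k/N}|\psi_N\rangle=\binom{N}{k}p^k(1-p)^{N-k}$ form a binomial distribution, whose mean $p$ and variance $p(1-p)/N$ reproduce the same result.
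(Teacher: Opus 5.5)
Your proof is correct and takes the standard route: the exact identity $\| (F_N(P)-p)\ket{\psi_N} \|^2 = p(1-p)/N$, obtained by splitting $F_N(P)^2$ into idempotent diagonal terms and factorizing off-diagonal terms, is the usual argument behind the result. The paper does not prove the theorem itself; it cites Finkelstein, Hartle and Herbut. Your binomial cross-check via $\langle\psi_N|Q_{k/N}|\psi_N\rangle=\binom{N}{k}p^k(1-p)^{N-k}$ is the same computation phrased through the spectral decomposition the paper quotes from Herbut.
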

 As an important corollary, we have that the commutator between the relative frequency operators corresponding to different, possibly non-commuting projectors $P_1$ and $P_2$ vanishes in the large-$N$ limit:
    \be \label{eq:corollary}
        \lim_{N\rightarrow\infty} \ls \Big[F_N(P_1),F_N(P_2) \Big]\vert \psi_N\rangle\rs\longrightarrow 0.
    \ee
Theorem~\ref{theorem: Finkelstein-Hartle-Herbut} shows that, for large $N$, any $N$-copy state $\ket{\psi_N} = \ket{\psi}^{\otimes N}$ approaches an eigenstate of the frequency operator with eigenvalue $\bra \psi P \ket \psi$ -- the textbook probability obtained by applying the Born rule to a single system $\mathsf a$.
Because relative frequencies converge to $\bra \psi P \ket \psi$, we refer to their $N \to \infty$ limit as the \emph{Born probabilities}. By Eq.~\eqref{eq:corollary}, the Born probabilities corresponding to the possibly non-commuting projectors $P_1$ and $P_2$ are jointly well defined.
Theorem \ref{theorem: Finkelstein-Hartle-Herbut} extends naturally to the expectation values of an observable (Hermitian operator). Consider an observable $A$ on $\mathcal H_{\mathsf a}$ with spectral decomposition $A = \sum_ia_i P_i$.  
Its expectation value is $\langle A\rangle=\sum_i a_ip_i$, where $p_i=\bra{\psi}P_i\ket{\psi}$ are probabilities. Using the linearity of the relative frequency operator, we define the \emph{expectation value operator} $F_N(A) = \sum_i a_i F_N(P_i) = (1/N) \sum_k A^k $. It then follows that
\( \langle F_N(A)\rangle_{\psi_N}=\sum_i a_i\langle F_N(P_i)\rangle_{\psi_N} = \sum_i a_i\bra{\psi}P_i\ket{\psi} \),
and
\be
    \lim_{N\rightarrow\infty} \ls \Big( F_N(A) - \langle A \rangle \Big)\vert \psi_N\rangle \rs\ \longrightarrow 0.
\ee
As in Eq.~\eqref{eq:corollary}, the commutator of the relative expectation value operators associated with two (possibly non-commuting) observables $A_1$ and $A_2$ vanishes in the large $N$ limit:
\begin{equation}\label{eq:revcommutativity}
        \lim_{N\rightarrow\infty} \ls \Big[F_N(A_1),F_N(A_2) \Big]\vert \psi_N\rangle\rs\longrightarrow 0.
\end{equation}
We close this section with three remarks.
First, in no practical experimental situation is it meaningful to take the mathematical limit 
$N\to\infty$. Rather, the limit should be understood as referring to a finite but arbitrarily large $N$, for which the relevant probabilities have stabilised to the required accuracy. This is what we mean by the limit in this paper. 
Second, as mentioned above, we interpret Theorem~\ref{theorem: Finkelstein-Hartle-Herbut} in the operational sense relevant to everyday experimental practice in quantum laboratories, where large statistics of outcome sequences are collected and compared with theoretical predictions. Its significance for the interpretation of probability itself has been discussed elsewhere; see, for example, Refs.~\cite{FarhiGoldstoneGutmann1989,CavesSchack2005FrequencyOperator}. Finally, we assumed the $N$-copy state to be i.i.d.,
$\ket{\psi_N} = \ket{\psi}^{\otimes N}$. This state assignment corresponds to the idealized situation in which the preparation device operates repeatedly ``under the same conditions''. In realistic scenarios, the i.i.d. assumption hardly holds exactly. In particular, Bell experiments with
correlations across different pairs provide an important example in which the
usual i.i.d. structure of repeated trials can fail in a physically relevant
way~\cite{AshhabMaruyamaBruknerNori2009,GallegoDakic_2021}. However, an operational route to the notion of a quantum state, accounting for systems that are not necessarily i.i.d., is provided by the (quantum) de Finetti theorem, which can be explained as follows. 

From a Bayesian perspective, before collecting data one assigns a joint state $\rho^{( N)}$ to the $ N$ systems produced by the device, and considers sequences of states for increasing $N$. If the sequence of assigned states is \emph{exchangeable}, meaning that for every $N$ the state $\rho^{(N)}$ is invariant under arbitrary permutations of the $N$ subsystems, and \emph{extendible}, meaning that for every $N$ there exists, for any $M>0$, a state $\rho^{(N+M)}$ on $N+M$ systems that is itself permutation-invariant and whose partial trace over the last $M$ systems equals $\rho^{(N)}$, then the quantum de Finetti representation theorem implies that, for every finite $N$, the joint state admits the form~\cite{Renner2007,ChristandlKoenigMitchisonRenner2007}
\begin{equation}\label{eq:de Finetti}
    \rho^{(N)} = \int \rd\sigma\, \mu(\sigma)\,\sigma^{\otimes N},
\end{equation}
for some probability measure $\mu(\sigma)$ over single-system density operators $\sigma$~\cite{HudsonMoody1976,CavesFuchsSchack2002,Renner2007}. Operationally, this means that an experimenter’s predictions for any finite set of runs are equivalent to assuming that each system is an independently prepared state $\sigma$, drawn from a prior distribution $\mu(\sigma)$.

\section{The relational relative frequency operator}\label{rrfs}

\noindent In describing Alice's experiment in the previous section, we implicitly assumed that she has access to an ideal reference frame with respect to which she measures the system. In this section we make the frame explicit. Moreover, we consider the case in which the frame is non-ideal and may be given a quantum description -- a non-ideal, quantum reference frame (QRF).
Suppose Alice performs measurements on $N$ copies of a system $\mathsf a$ relative to her (possibly \emph{non-ideal}) QRF $\mathsf A$.
By ``non-ideal'' we mean that $\mathsf A$ lacks the resources to define an arbitrarily precise localization of $\mathsf a$ in configuration space. We assume $\mathsf A$ and $\mathsf a$ are described by the Hilbert spaces $\mathcal H_\mathsf{A}$ and $\mathcal H_\mathsf{a}$, carrying representations $U_{\mathsf A}(g)$ and $U_{\mathsf a}(g)$ of the group $G$, respectively. As Alice performs measurements on $N$ copies of $\mathsf a$, the Hilbert space of the experiment is $\mathcal H = \mathcal H_\mathsf{A} \otimes \mathcal H_{\mathsf{a}^N}$.

We assume Alice has no access to any (possibly fundamentally inexistent) external reference frame. Therefore, all she can access are \textit{$G$-invariant quantities}, represented by operators $T$ on $\mathcal H_\mathsf{A} \otimes \mathcal H_{\mathsf{a}^N}$ such that $[U_\mathsf{A}(g)\otimes U_{\mathsf{a}^N}(g), T] =0$ for all $g \in G$. In particular, she can measure \textit{relational observables}, obtained by ``relativising'' observables on $\mathcal H_{\mathsf{a}^N}$ with respect to her reference frame. That is, to any observable $A$ on $\mathcal H_{\mathsf a^N}$ we assign a relational observable on $\mathcal H_{\sf A}\otimes H_{\mathsf a^N}$ according to the prescription \cite{kitaev2004superselection,BartlettRudolphSpekkens2007, loveridge2012quantum, loveridge2018symmetry}
\begin{equation}
A \to \widetilde A = \mathcal G[E(e) \otimes A]. 
\end{equation}
Here $\mathcal G$ denotes $G$-twirling, 
\begin{equation}
\mathcal G [T] = \int_G \mathrm d g \; U \otimes U \; T \;  U^\dagger  \otimes U^\dagger,
\end{equation}
for all operators $T$ on $\mathcal H_{\sf A}\otimes H_{\mathsf a^N}$, and $E(e)$ is the ``seed element'', corresponding to the identity $e \in G$, of a covariant POVM (positive operator-valued measure). By definition a covariant POVM is formed by operators $0 \leq E(g):= g \triangleright E(e) \leq \id$ such that $\int_G \rd g \, E(g) = \id$, where $\rd g$ is the normalised Haar measure on $G$. Throughout, we use the notation $g \triangleright T$ to denote the action of $g\in G$ on a generic object $T$; for instance, if $T$ is an operator on a Hilbert space carrying the representation $U_g$, then  $g  \triangleright T := U_g T U^\dagger_g$. The operator $\widetilde A$ has the physical interpretation of ``measuring $A$ relative to the QRF $\mathsf A$''. 

In this work, we are interested in the relative frequencies measured relative to $\mathsf A$, so we introduce the \textit{relational relative frequency} (RRF) operator,
\begin{equation}\label{eq:rrf-pvm-construction}
    \widetilde{F}_N(P) := \cG\l[ E(e) \otimes F_N(P)\r] = \int_G\rd g \, E(g) \otimes g \triangleright F_N(P) = \int_G\rd g \, E(g) \otimes F_N(g\triangleright P).
\end{equation}
For the rest of the paper, we assume POVMs defined by coherent states $\ket{g}$, setting
$E(g)= (1/r)\vert g\rangle\langle g\vert $ with $(1/r)\int \rd g \, \ketbra{g}{g} = \id$. The inner product
$\langle g\vert g'\rangle \neq\delta(gg'^{-1})$, for $g,g'\in G$, characterises the non-ideal nature of the QRF. Inserting an observable $A$ in place of a projector in Eq.~\eqref{eq:rrf-pvm-construction}, we may also define a \emph{relational expectation value} (REV) operator $\widetilde F_N(A)$ in a completely analogous way.

Crucially, we construct Eq.~\eqref{eq:rrf-pvm-construction} by first forming the relative frequency operator $F_N(P)$ and then making a relational observable out of it. We therefore have a \emph{single copy} of the QRF $\mathsf A$ per $N$ copies of the system $\mathsf a$. Had we proceeded the other way around -- first forming a relational operator for a single copy of $\mathsf a$ and then building the corresponding relative frequency operator -- we would have ended up with $N$ independent copies of $\mathsf A$, requiring far more resources than Alice has in her lab. 

When thinking operationally about relational relative frequencies in the absence of an external reference frame, Alice may implement two different experimental procedures, each related to the RRF operator~\eqref{eq:rrf-pvm-construction}. In the first, which we call the POVM construction, Alice measures the POVM obtained by \(G\)-twirling the spectral projectors of the non-relativised relative frequency operator \(F_N(P)\). Here \(\widetilde F_N(P)\) is the first-moment operator of the resulting POVM, with weights given by the eigenvalues of \(F_N(P)\). Concretely, take the spectral decomposition of $F_N(P)$ presented in Section \ref{sec:therelativefrequencyoperator},
   $ F_N(P)=\sum_\nu \nu\, Q_\nu$,
    where $\nu =k/N$
are the possible relative frequencies and
\(Q_\nu\) are the corresponding spectral projectors. Defining $\widetilde{Q}_\nu
    =
   (1/r)\mathcal G[\ketbra{e}{e}\otimes Q_\nu]$, we obtain
\begin{equation}
   \widetilde F_N(P)= \sum_\nu \nu\,\widetilde{Q}_\nu. 
    \label{eq:rrf-povm-first-moment}
\end{equation}
The operators
\(\{\widetilde{Q}_\nu\}_\nu\) form a (generally non-projective) POVM whose first moment is exactly the RRF operator. In the second, which we call the PVM (projective-valued measure) construction, the Hermitian operator \(\widetilde F_N(P)\) defines a projective measurement whose outcome distribution is determined by its spectral projectors:
\begin{equation}
    \widetilde F_N(P)=\sum_\lambda \lambda\, \Pi_\lambda ,
    \label{eq:rrf-pvm-spectral-decomposition}
\end{equation}
where the eigenvalues \(\lambda\) are real and the \(\Pi_\lambda\) are
mutually orthogonal projectors. 

The  POVM and PVM constructions coincide in the limit of ideal reference frames, and for sharp, ideal QRF states $\ket{g}$, they are equivalent to the projective measurement defined by the relative frequency operator~\eqref{eq:relativefrequencyoperator},
thereby reproducing the predictions of ordinary quantum theory. 
For non-ideal QRFs, however, the constructions are generally distinct. 

In the POVM construction, the outcomes \(\nu\) are the same relative frequency values \(k/N\) as for the non-relativised frequency operator. In the scenario considered here, which assumes only a single copy of the QRF per $N$ system copies, however, their natural operational interpretation requires an ideal external reference frame for the relevant transformation group. As discussed in Ref.~\cite{garmier2025perspectives}, a general relational POVM may be implemented by first measuring the orientation of the QRF relative to the external frame and subsequently, conditioned on the result, performing the correspondingly transformed measurement on the system. The QRF and the system are thus measured separately relative to the external frame, and their relational statistics are obtained by combining the two measurement records. In Appendix~\ref{app:povmhiddenvariable} we support this interpretation by proving that the reduced POVM on the system is equivalent to a classical mixture of measurements on the system defined relative to the ideal external reference frame, consistent with Ref.~\cite{loveridge2018symmetry}.

By contrast, the PVM construction cannot be reduced to such a classical average, and therefore does not require an external reference frame to be implemented in the scenario we study. It is for this reason the construction of primary interest here.
To implement it, Alice couples the systems
\(\mathsf a_1,\ldots,\mathsf a_N\) and measures the resulting collective system relative to her QRF \(\mathsf A\). This is consistent with our assumption that she holds a single copy of \(\mathsf A\) for each block of \(N\) copies of \(\mathsf a\) and has no access to an external reference frame: in a single shot, she measures the entire \(N\)-system block collectively. This procedure is somewhat analogous to the measurement of mean magnetization, from which one can infer the relative frequencies. However, according to Eq.~\eqref{eq:rrf-pvm-spectral-decomposition} the relational relative frequencies of the PVM construction need not coincide with the rational numbers $k/N$ that label ordinary finite-$N$ relative frequencies. For non-ideal QRFs, relativisation turns rational-valued relative frequency operators into noncommuting operator sums. Hence their spectra need not remain rational.
In Appendix~\ref{URPOVM}, we show that the PVM is the sharpest possible measurement of $\widetilde F_N(P)$.

Although we adopt the PVM construction as the operational definition of measuring $\widetilde F_N(P)$ (for the reason given above), we also study the uncertainty relations stemming from the POVM construction in Subsection~\ref{subsec:heisenbergtradeoff}, and show in Appendix~\ref{app:povmhiddenvariable} that this construction cannot violate the Bell inequality. 

We now present two concrete examples of RRF operators which will be useful later on.

\subsection{RRF operator for SU(2)} \label{subsec:rrfforsu2}
In order for the reference frame states to resolve the full $\SU(2)$ group, we consider a ``truncated regular representation'' as the QRF (see e.g.~\cite{de2021perspective,garmier2025perspectives}). The Peter-Weyl theorem states that the regular representation decomposes into sectors of well-defined total angular momentum $J$, and that each $J$-sector is spanned by (not normalised) vectors of the form
\begin{equation}\label{eq:regularrepresentation}
    \ket{g} = \sqrt{d_J}\sum_{m,n}D^J_{mn}(g)\,\ket{J,m}_L\otimes\ket{J,n}^*_R,
\end{equation}
where $D^J(g)$ is the Wigner D-matrix of spin $J$, $d_J=2J+1$, and $L$ and $R$ denote respectively the left-regular and right-regular tensor factors of the reference frame (see Appendix~\ref{app:uncertaintyrelations} for details). We adopt the convention in which the action of $\SU(2)$ is given by the left-regular representation, while the right-regular representation acts on the ($\SU(2)$-invariant) multiplicity space.

For our purposes, it suffices to assume that $\mathsf A$ lives in a single $J$-sector.
With the normalisation~\eqref{eq:regularrepresentation}, it holds that \(\int  \rd g\ketbra{g}{g}  = \id_{\mathsf A}\), and we obtain (see Appendix~\ref{app:uncertaintyrelations})
\begin{equation}\label{eq:niceformofsu2relativefrequencies}
     \widetilde{F}_N(P_{\mathbf{v}})= \mathcal G [\ketbra{e}{e}\otimes  F(P_{\mathbf v})]= \frac{1}{2} \id_{\mathsf A \mathsf a^N} + \frac{1}{N} \frac{1}{J(J+1)} \, (\mathbf{v}\cdot \mathbf{R}_{\mathsf A}) \otimes{\mathbf{L}_{\mathsf A}} \cdot {\mathbf{S}}_{\mathsf a^N},
\end{equation}
where $\mathbf{R_{\mathsf A}} = (\id \otimes R_x,\id \otimes R_y,\id \otimes R_z)$ is the right-regular angular momentum vector operator on $\mathsf A$,  $\mathbf{L}_{\mathsf A} = (L_x \otimes \id, L_y  \otimes \id, L_z \otimes \id)$ is the left-regular angular momentum operator on $\mathsf A$ and $\mathbf{S}_{\mathsf a^N}$ is the angular momentum vector operator on the $N$ copies of $\mathsf a$. Note that the unit vector $\mathbf v$, specifying the direction in which spin is measured via the projector $P_{\mathbf v}=\frac{1}{2}(\id + \mathbf v \cdot \mathbf \sigma)$, couples to the multiplicity degrees of freedom of the right-regular representation, whereas the non-invariant left regular angular momentum $\mathbf L_{\mathsf A}$ couples to the (non-invariant) angular momentum vector on $\mathsf a^N$ via $\mathbf{L}_{\mathsf A}\cdot \mathbf{S}_{\mathsf a^N}$. Intuitively, this coupling means that Alice measures ``how many'' spins of the system $\mathsf a$ are ``aligned'' with her QRF $\mathsf A$.

The spectrum of $\widetilde F_N(P_{\mathbf v})$ lies inside $\l[\frac{1}{2}\frac{1}{J+1},\frac{1}{2} + \frac{1}{2}\frac{J}{J+1}\r]$. In particular, for $J=1/2$, the probabilities lie in $[\frac{1}{3}, \frac{2}{3}]$. As in the non-relational case, the action of \(\widetilde{F}_N(P_{\mathbf v})\) on a product state
\(\ket{\chi}=\ket{e}\otimes\ket{\psi}^{\otimes N}\)
still contains \(N\)\  and $J$-dependent off-diagonal terms, which vanish only in the limit \(N,J\to\infty\).

\subsection{REV operator for the Heisenberg-Weyl group}
\label{subsec:heisenbergweyl}
 Consider now the case where both the QRF $\mathsf A$ and the system $\mathsf a$ are particles on one-dimensional line, whose Hilbert spaces carry representations of the Heisenberg-Weyl group, generated by the dimensionless position and momentum operators $ X = (a + a^\dagger)/\sqrt 2$ and $P = i(a - a^\dagger)/\sqrt 2$. We are interested in studying REV operators for arbitrary quadratures  $X_\vartheta\;=\;\cos\vartheta\, X+\sin\vartheta\, P$.

By the linearity of the REV operators $\widetilde F_N(X)$ and $\widetilde F_N(P)$, we can define the REV operator of an arbitrary quadrature $X_\vartheta$. In Appendix~\ref{app:uncertaintyrelations} we show that it may be written as
\begin{equation}
\widetilde F_N( X_\vartheta) \; = \; \frac{1}{\pi}\mathcal G[\ketbra{\alpha =0}{\alpha =0} \otimes F_N(X_\vartheta)] \; = \; \id_\mathsf{A}\otimes F_N( X_\vartheta)\;-\; X_\vartheta\otimes\id_{\mathsf{a}^N},
\label{eq:relquadmain}
\end{equation}
where $\ket{\alpha=0}$ is the ``vacuum'' coherent state. Let $\ket{q}_\vartheta$ denote the eigenvector of $X_\vartheta$ with eigenvalue $q$, so that
$\bra{q}q'\rangle_{\vartheta}=\delta(q-q')$, and $\int_{\mathbb R} \mathrm d q\,\ket{q}\!\bra{q}_\vartheta = \id$.
On $\mathsf{a}^N$, define
$\ket{\vec q \,}_\vartheta\;:=\;\bigotimes_k\ket{q_k}_\vartheta \in \mathcal H_{\mathsf a^N}$, where
$\vec q \,=(q_1,\dots,q_N)\in\mathbb R^N$ and $k = 1,\dots,N$. These vectors satisfy
$ X_{\vartheta,k}\ket{\vec q \,}_\vartheta=q_k\ket{\vec q \,}_\vartheta$, and
$F_N(X_\vartheta)\ket{\vec q \,}_\vartheta= \bar q \ket{\vec q \,}_\vartheta$,
where $\bar q := (1/N)\sum_{k=1}^{N}q_k$. Both terms in Eq.~\eqref{eq:relquadmain} are then diagonal on the product state $\ket{q}_\vartheta\otimes\ket{\vec q \,}_\vartheta$, giving
\begin{equation}
\widetilde F_N( X_\vartheta)\,\bigl(\ket{q}_\vartheta\otimes\ket{\vec q \,}_\vartheta\bigr)\;=\;
(\bar q-q)\,\ket{q}_\vartheta\otimes\ket{\vec q \,}_\vartheta .
\label{eq:eigenvalue}
\end{equation}
The PVM Alice performs therefore amounts to measuring the mean quadrature (in direction $\vartheta$) of the joint system $\mathsf{a}^N$ relative to the single copy of $\mathsf A$. This differs from the mean relative quadrature of a single copy, averaged over $\mathsf{a}^N$, which would require multiple copies of the QRF.

\section{Uncertainty relations}\label{sec:uncertaintyrelations}

\noindent In ordinary quantum theory, the relative frequencies of any two effects, as well as the expectation values of any two different observables, are always jointly well defined for any state. This holds even when the effects are incompatible or the observables do not commute, and may be traced to the commutativity of the relative frequency and expectation value operators in Eqs.~\eqref{eq:corollary} and~\eqref{eq:revcommutativity}. In assigning definite probabilities and expectation values, ordinary quantum theory tacitly assumes an ideal reference frame. Making this assumption explicit, we show in this section that, when the (quantum) reference frame is non-ideal, the relative frequencies and expectation values of two observables may fail to be jointly well defined, even in the limit of arbitrarily many system copies. Specifically, the corresponding relational relative frequency and relational expectation value operators do not generally commute as $N \to \infty$, leading to Heisenberg-type uncertainty relations for relative frequencies and expectation values.

\subsection{Non-commutativity}
\label{sec:uncertainty}
We first show that, when Alice has access only to a single, non-ideal QRF $\mathsf A$, measurements of the relative frequencies of different effects with respect to $\mathsf A$ are \textit{not necessarily} jointly definable -- not even in the limit of infinitely many runs. The core result is the following:

\begin{theorem}
\label{proposition: non-commutativity}
    Let $\vert \phi\rangle_{\mathsf A}$ be the (pure) state assigned to Alice's quantum reference frame $\mathsf A$, $\vert \psi_N\rangle = \ket{\psi}^{\otimes N} \in\cH_{\mathsf a}^{\otimes N}$ be an $N$-copy state of the system $\mathsf a$, $P_1$ and $P_2$ be projectors on $\cH_{\mathsf a}$ and $E(g)$ the covariant POVM element on $\mathcal H_\mathsf A$ associated to the group element $g$.
    Then, the relational relative frequency operators $\widetilde F_N(P_1)$ and $\widetilde F_N(P_2)$ acting on $\ket{\chi}=\ket{\phi}\otimes \ket{\psi_N}$ satisfy
    \begin{equation}\label{eq: non-commutativity}
\lim_{N\rightarrow\infty}\ls\Big[\widetilde{F}_N(P_1),\widetilde{F}_N(P_2)\Big]\vert \chi\rangle\rs = \ls \Big[M_{P_1},M_{P_2} \Big]\vert \phi\rangle\rs,
    \end{equation}
    with 
    \begin{equation}
            M_{P_1}  := \int_G\rd g \,  \langle\psi\vert g\triangleright P_1\vert \psi\rangle E(g) \qquad \text{{and}} \qquad 
            M_{P_2}  := \int_G\rd g \, \langle\psi\vert g\triangleright P_2\vert \psi\rangle E(g). 
    \end{equation}
\end{theorem}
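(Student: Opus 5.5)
Write $\widetilde F_N(P_i)=\int_G \rd g\, E(g)\otimes F_N(g\triangleright P_i)$, as in Eq.~\eqref{eq:rrf-pvm-construction}. The idea is to replace, inside the integral, each frequency operator $F_N(g\triangleright P_i)$ by its Born value $p_i(g):=\bra{\psi}g\triangleright P_i\ket{\psi}$. By Theorem~\ref{theorem: Finkelstein-Hartle-Herbut} this replacement costs nothing on $\ket{\psi_N}$ as $N\to\infty$. After the replacement, the operators become $M_{P_i}\otimes\id$.

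\textbf{Step 1: decomposition.} Define the remainder
\[
R_N^{(i)}:=\widetilde F_N(P_i)-M_{P_i}\otimes\id=\int_G\rd g\,E(g)\otimes\bigl(F_N(g\triangleright P_i)-p_i(g)\bigr).
\]
Expanding the commutator gives
\[
[\widetilde F_N(P_1),\widetilde F_N(P_2)]=[M_{P_1},M_{P_2}]\otimes\id+[M_{P_1}\otimes\id,R_N^{(2)}]+[R_N^{(1)},M_{P_2}\otimes\id]+[R_N^{(1)},R_N^{(2)}].
\]
Acting on $\ket\chi=\ket\phi\otimes\ket{\psi_N}$, the first term has norm $\|[M_{P_1},M_{P_2}]\ket\phi\|$, because $\ket{\psi_N}$ is normalised. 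By the reverse triangle inequality, it then suffices to show that the other three terms, applied to $\ket\chi$, vanish in norm.

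\textbf{Step 2: operator norms.} All operators here are uniformly bounded. Since $0\le F_N\le\id$ and $E(g)\ge0$ with $\int E(g)=\id$, we get $0\le\widetilde F_N(P_i)\le\id$ and $0\le M_{P_i}\le\id$, hence $\|R_N^{(i)}\|\le 2$. It is therefore enough to show $\|R_N^{(i)}\ket\chi\|\to0$ and $\|R_N^{(i)}(M\otimes\id)\ket\chi\|\to0$ for any bounded $M$ on $\mathcal H_{\mathsf A}$. Terms of the form $M R\ket\chi$ then vanish by boundedness. Terms of the form $RM\ket\chi$ are covered by the same argument with $\ket\phi$ replaced by the arbitrary vector $M\ket\phi$.

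\textbf{Step 3: the key estimate (main obstacle).} We must show $\|R_N\,(\ket{\phi'}\otimes\ket{\psi_N})\|\to0$ for an arbitrary fixed $\ket{\phi'}$. The difficulty is that we need the deviation to vanish under an integral over $g$, with the positive operator weights $E(g)$, and not just pointwise in $g$. For this I would use an operator Cauchy--Schwarz (Kadison/Jensen-type) inequality for the positive operator-valued measure $E$: for $D(g)=F_N(g\triangleright P)-p(g)$,
\[
\Bigl(\int E(g)\otimes D(g)\Bigr)^2\le\int E(g)\otimes D(g)^2 .
\]
This holds because $\int E\otimes D$ is the compression of the multiplication operator by $D(g)$ through the Naimark isometry $V=\int\sqrt{E(g)}\otimes\ket g$. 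It gives
\[
\|R_N(\ket{\phi'}\otimes\ket{\psi_N})\|^2\le\int\rd g\,\bra{\phi'}E(g)\ket{\phi'}\,\bra{\psi_N}D(g)^2\ket{\psi_N}.
\]
The variance computation behind Theorem~\ref{theorem: Finkelstein-Hartle-Herbut} is exact:
\[
\bra{\psi_N}D(g)^2\ket{\psi_N}=\frac{p(g)\,(1-p(g))}{N}\le\frac{1}{4N}.
\]
This bound is uniform in $g$, so the integral is at most $\|\phi'\|^2/(4N)\to0$. With this uniformity the limit passes through the integral without any appeal to dominated convergence.

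Combining Steps 1–3 yields Eq.~\eqref{eq: non-commutativity}.
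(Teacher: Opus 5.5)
Your proof is correct. It shares the paper's underlying idea, replacing $F_N(g\triangleright P_i)$ by its Born value $p_i(g)=\bra{\psi}g\triangleright P_i\ket{\psi}$ inside the twirl, but it justifies that replacement in a different and more careful way.

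The paper refactorises the commutator as $\frac12\int\!\!\int\rd g\,\rd h\,\bigl([E(g),E(h)]\otimes\{F_N(g\triangleright P_1),F_N(h\triangleright P_2)\}+\{E(g),E(h)\}\otimes[F_N(g\triangleright P_1),F_N(h\triangleright P_2)]\bigr)$. It then inserts the pointwise-in-$(g,h)$ limit of Theorem~\ref{theorem: Finkelstein-Hartle-Herbut} under the double integral. The anticommutator tends to $2p_1(g)p_2(h)$ and the inner commutator to zero, leaving $\int\!\!\int\rd g\,\rd h\,[E(g),E(h)]\,p_1(g)p_2(h)=[M_{P_1},M_{P_2}]$. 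This form shows clearly that the frame commutator $[E(g),E(h)]$ is the only surviving source of noncommutativity.

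However, the paper leaves implicit the exchange of the $N\to\infty$ limit with the operator-valued double integral. That exchange is harmless for a compact group, where the integrand is uniformly bounded on a finite measure space. It is not automatic in general: for the Heisenberg--Weyl frame, $\int\!\!\int\rd g\,\rd h\,\|[E(g),E(h)]\ket{\phi}\|$ can diverge for normalisable $\ket{\phi}$, so there is no obvious dominating function.

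Your argument closes this gap. You use the additive split $\widetilde F_N(P_i)=M_{P_i}\otimes\id+R_N^{(i)}$ and the Kadison--Schwarz bound $(R_N^{(i)})^2\le\int\rd g\,E(g)\otimes D(g)^2$ from the Naimark isometry. In that isometry, $\ket{g}$ must be read as the $\delta$-normalised basis of $L^2(G)$, not the frame's coherent state. You combine this with the exact variance $p(g)(1-p(g))/N\le 1/(4N)$, which is uniform in $g$. The result is the stronger vector statement $\bigl\|[\widetilde F_N(P_1),\widetilde F_N(P_2)]\ket{\chi}-([M_{P_1},M_{P_2}]\ket{\phi})\otimes\ket{\psi_N}\bigr\|=O(N^{-1/2})$. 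It uses only $E(g)\ge0$ and $\int_G\rd g\,E(g)=\id$, so it holds as written for non-compact groups.

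Its reliance on bounded remainders is fine for projectors, and for bounded observables. Extending it to REVs of unbounded quadratures, as the remark after the paper's proof suggests, would need extra domain control.
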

\begin{proof}
    Refactorising the commutator gives
    \begin{equation}\label{eq:refactorising}
        \begin{aligned}
            \Big[ \widetilde{F}_N(P_1),\widetilde{F}_N(P_2) \Big] = \frac{1}{2}\int_G \rd g\rd h \Big([E(g),E(h)]   \otimes & \{ F_N(g\triangleright P_1),F_N(h\triangleright P_2) \} \\  
            +  \{E(g),E(h)\}\otimes & [F_N(g\triangleright P_1),F_N(h\triangleright P_2)]\Big), 
        \end{aligned}
    \end{equation}
where $\{E(g),E(h)\}$ denotes the anticommutator of $E(g)$ and $E(h)$. Note that Theorem \ref{theorem: Finkelstein-Hartle-Herbut}, as well as its corollary in Eq.~\eqref{eq:corollary}, holds for all individual $g$'s 
    \begin{equation}\label{eq:herbutforeachg}
        \lim_{N\rightarrow\infty} \ls \Big( F_N(g\triangleright P_1)-\langle\psi\vert g\triangleright P_1\vert \psi\rangle \Big)\vert \psi_N\rangle\rs= 0.
    \end{equation}
 Plugging Eq.~\eqref{eq:herbutforeachg} into Eq.~\eqref{eq:refactorising} applied to $\ket{\chi}$ yields Eq.~\eqref{eq: non-commutativity}.
\end{proof}
\noindent A few remarks are in order. Most importantly, because Alice's QRF is non-ideal, the right hand side of Eq.~\eqref{eq: non-commutativity} does not vanish in general. However, it does vanish in the limit where $\langle g\vert g'\rangle_R=\delta(gg'^{-1})$, i.e. when the frame is ideal, recovering Theorem~\ref{theorem: Finkelstein-Hartle-Herbut} and its corollary. Second, the projectors $P_1$ and $P_2$ need not be non-commuting for having non-commuting relative frequency operators. Finally, an analogous result clearly holds for the commutator of two REV operators, some consequences of which we explore in Subsection~\ref{subsec:uncertaintyofexpectation}.

Theorem~\ref{proposition: non-commutativity}
implies that, when Alice's QRF is non-ideal, the $N\to\infty$ limit of relative frequencies, namely the Born probabilities, may remain uncertain. Indeed, a non-vanishing Eq.~\eqref{eq:refactorising} means that the probability distributions of two noncommuting RRFs cannot both be sharp. Figure~\ref{fig:convergence_finite_J} illustrates the operational situation Alice may encounter when measuring the relative frequency of a given outcome using an ideal (upper panel) or non-ideal (lower panel) reference frame. For an ideal QRF we recover the standard case, in which the relative frequency converges to a single probability value in accordance with the law of large numbers. For every finite $N$, the estimate is associated with a confidence interval that scales as $ \sim 1/\sqrt{N}$. 
By contrast, when the QRF is non-ideal, the relative frequency operator is not sharp, so the relative frequencies obtained from different series of \(N\) repetitions need not converge to the same probability value, even in the idealized limit of infinitely many runs. While the confidence interval initially narrows as $N$ grows, this behaviour does not persist as $N$ increases further; instead, it saturates at a constant width.

\begin{figure*}[htbp]
    \centering
    \includegraphics[width=0.7\linewidth]{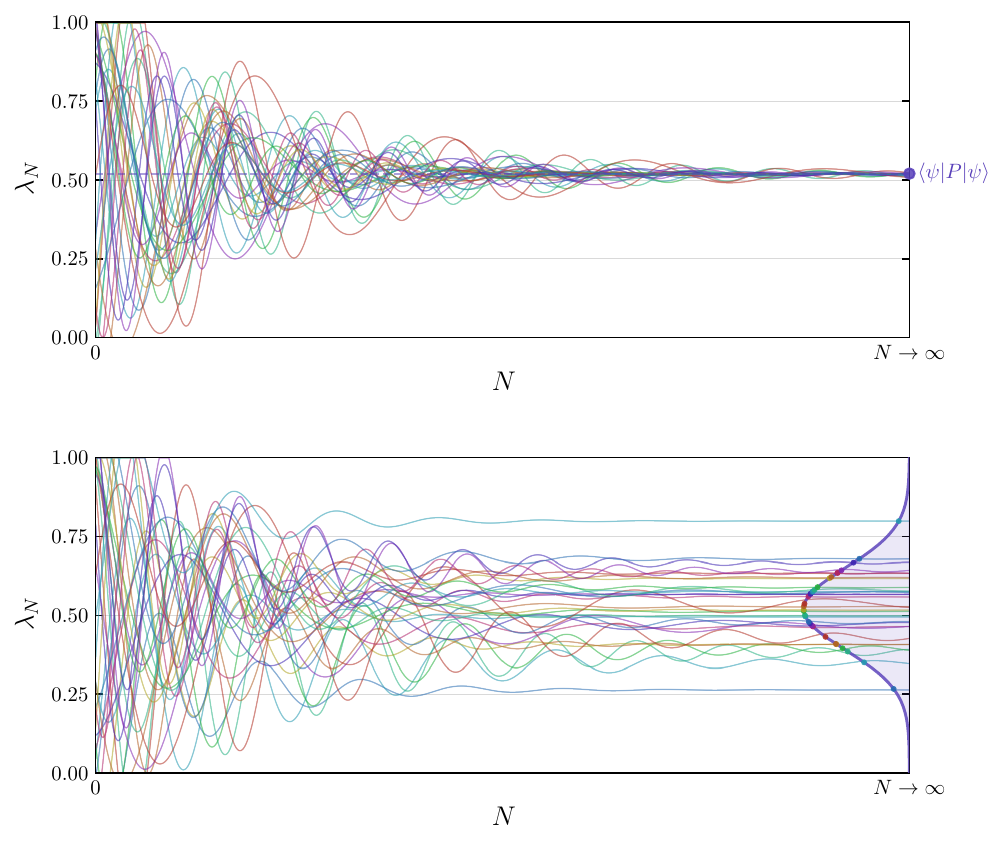}
    \caption{\label{fig:convergence_finite_J} \justifying In ordinary quantum theory, the value of an observable may be uncertain, but the probabilities are sharply defined. The Born rule assigns each outcome a fixed number $p = \bra \psi  P\ket \psi$, and the law of large numbers guarantees that the relative frequencies recorded over $N$ system copies converge to these probabilities as $N\to\infty$: individual outcomes may be uncertain, but the statistics governing them are not. This, however, presupposes an ideal reference frame with respect to which the measurements are done. When Alice's QRF $\mathsf{A}$ is non-ideal, this is no longer the case. The $N\to\infty$ limit of the relative frequencies --- namely, the Born probabilities --- remain uncertain. This figure is a qualitative illustration of the effect. Here, each coloured curve shows how a single relational relative frequency $\lambda_N$ may vary from run to run as the number $N$ of systems increases.
    \\ \textbf{Upper panel:} Relative to an ideal quantum reference frame, the relative frequencies are given by the relational relative frequency operator \(\widetilde F_N(P)\) on a state \(\ket{\chi}\), featuring an i.i.d. preparation of the system. As the number of copies grows arbitrarily, the relative frequencies converge and coincide with the single-system probabilities as given by the Born rule (see Theorem~\ref{theorem: Finkelstein-Hartle-Herbut}). In this case, the lines can be understood as describing the change of relative frequencies in a single experiment as the number of systems increases. \\ \textbf{Lower panel:} When measured with respect to a non-ideal quantum reference frame, the relational relative frequencies corresponding to the operator $ \widetilde F_N(P) $ and the state $\ket \chi$ remains uncertain even in the limit of infinitely many system copies (see Theorem~\ref{proposition: non-commutativity}). In this case, the lines should only be understood figuratively, as our computation involves a collective measurement of $N$ copies relative to the non-ideal frame. While run-to-run changes of the relational relative frequency lie outside our framework, the mean and the variance of the distribution of RRFs can be computed within it.
    In some cases, the spectrum of the RRF operator may be a strict subset of $[0,1]$.}
\end{figure*}

In the following, we derive Heisenberg-like uncertainty relations for RRF and REV operators in the large-$N$ limit in both the PVM and POVM constructions, followed by their concrete examples for the \(\SU(2)\), Heisenberg--Weyl, and centrally extended Galilei groups. 

\subsection{Uncertainty relations} \label{subsec:heisenbergtradeoff}
Ordinary quantum theory tacitly assumes the use of an ideal reference frame. The expectation values \(\langle  A_1 \rangle_\chi\) and \(\langle A_2 \rangle_\chi\) of any two observables \( A_1 \) and \( A_2\) are therefore jointly well defined for any state \(\ket{\chi}\), even when \( A_1\) and \(A_2\) do not commute. What non-commutativity affects is not the expectation values themselves but the variances of the observables, as expressed by the Heisenberg uncertainty relation. By contrast, in the presence of non-ideal reference frames, RRFs and REVs cannot in general, be simultaneously sharp: they already satisfy an uncertainty relation, a direct consequence of the non-commutativity of RRFs (Eq.~\eqref{eq: non-commutativity}) and its analogous result for REVs. In the following, we derive concrete instances of this Heisenberg-like trade-off.

Consider a Hermitian operator $A$ with spectral decomposition $A = \sum_n a_n P_{n}$ and its corresponding REV operator $\widetilde F_N(A)$. In the case where all $a_n$s are zero except for one, $A$ is a projector, $A = P$, and is associated with the RRF operator $\widetilde F_N(P)$. In either case, the results of the previous subsection imply the following Heisenberg-like trade-off for any state $\ket{\chi}$: 

\begin{equation}
    \Delta_\chi^2(\widetilde F_N( A_1))\Delta_\chi^2(\widetilde F_N( A_2)) \geq \frac{1}{4}\Big\vert \Big\langle \Big[\widetilde F_N( A_1), \widetilde F_N( A_2) \Big]\Big\rangle_\chi\Big\vert^2. \label{eq:uncertaintymain}
\end{equation}
For ideal QRFs, $\Delta^2_\chi(\widetilde F_N( A))$, vanishes as $N \rightarrow \infty$ for all states of the form $\ket{\chi} = \ket{\Phi} \otimes \ket{\Psi}^{\otimes N} \in \mathcal{H}_\mathsf{A} \otimes \mathcal H_{\mathsf{a}^N}$, where $\ket{\Phi}$ is an arbitrary state of an ideal QRF $\mathsf{A}$ and $\ket{\Psi}^{\otimes N}$ is an i.i.d. state of $N$ copies of systems $\mathsf a$. 

While uncertainty relation of Eq.~\eqref{eq:uncertaintymain} applies to the PVM construction,
an analogous relation holds for the POVM construction. For concreteness, take $A_1 = P_1$ and $A_2 = P_2$ to be projectors -- the RRF case -- though our results extend to the REV case as well. Recall that, in the POVM construction, the probability distribution over the outcomes of RRF measurements is given by the operators $\widetilde Q_\nu$ of Eq.~\eqref{eq:rrf-povm-first-moment}, of which the RRF operator $\widetilde F_N(P) = \sum_\nu \nu \,  \widetilde Q_\nu$ is the first moment. The second moment, \(\widetilde F^{(2)}_{N,\mathrm{POVM}}(P):=
\sum_{\nu}\nu^2\widetilde Q_{\nu}\), is in general not equal to  $ (\widetilde F_N(P))^2$, because the $\widetilde Q_\nu$s are not projectors. Let \(\Delta^2_{\mathrm{POVM},\chi}(\widetilde F_N(P)) := \langle \widetilde F^{(2)}_{N,\mathrm{POVM}}(P) \rangle_\chi - \langle \widetilde F_N(P)\rangle_\chi^2 \) denote the variance of the observed POVM outcome distribution for a single projector \(P\) in state \(\ket{\chi}\). As shown in Appendix~\ref{URPOVM}, and in agreement with Ref.~\cite{massar2007uncertainty}, this
variance decomposes as
\begin{equation}
    \Delta^2_{\mathrm{POVM},\chi}(\widetilde F_N(P))
    =
    \Delta^2_{\chi}(\widetilde F_N(P))
    + \langle \mathcal N_P \rangle_\chi
\end{equation}
where $\Delta^2_{\chi}(\widetilde F_N(P))$ is the variance of the PVM construction and  \(\mathcal N_P\geq0\) is an intrinsic ``noise operator'' of the POVM. Thus, already for a single projector $P$, the POVM construction can only increase the variance relative to that obtained in the PVM construction.

Applying this observation to two projectors \(P_1\) and \(P_2\), and
applying Eq.~\eqref{eq:uncertaintymain} to \(\widetilde F_N(P_1)\) and \(\widetilde F_N(P_2)\), we obtain
\begin{equation}
    \Delta^2_{\mathrm{POVM},\chi}(\widetilde F_N(P_1))\,
    \Delta^2_{\mathrm{POVM},\chi}(\widetilde F_N(P_2))
    \geq
    \frac14
    \left\vert 
        \Big\langle      
            \Big[\widetilde F_N(P_1),\widetilde F_N(P_2) \Big] \Big\rangle_\chi        
    \right\vert ^2 .
\end{equation}
Therefore the POVM construction inherits the incompatibility of the relational
relative frequency first moments. The difference from the PVM construction is
that the actually measured POVM variances contain additional positive noise
contributions.

We now study three concrete cases of uncertainty relations for RRFs and REVs in the PVM construction.

\subsubsection{Uncertainties for relational relative frequencies}

Suppose Alice measures the RRF observables $\widetilde F_N(P_{\mathbf v})$ and $\widetilde F_N(P_{\mathbf w})$, corresponding to spin measurements along the directions $\mathbf v$ and $\mathbf w$ (see Eq.~\eqref{eq:niceformofsu2relativefrequencies}). Assume that her QRF is in the state $\ket{\phi}$ and the $N$ systems $\mathsf a$ are prepared in the i.i.d. state $\ket{\psi_N} = \ket{\mathbf u, +}^{\otimes N}$, where $\mathbf{u}\cdot \mathbf{\sigma} \ket{\mathbf u, +} = \ket{\mathbf u, +}$. In Appendix~\ref{app:UR SU(2)}, we show that, for $\ket{\chi}= \ket{\phi} \otimes \ket{\psi_N}$, 
\be\label{eq:commutator SU(2) operator infinite N main}
    \bra{\chi}\Big[\widetilde{F}_N(P_\mathbf{v}),\widetilde{F}_N(P_\mathbf{w})\Big]\ket{\chi} = \frac{i}{8J_\mathsf{A}^2\,(J_\mathsf{A}+1)^2}\bra{\phi}(\mathbf{v}\times\mathbf{w})\cdot\mathbf{R}_\mathsf{A}\otimes(\mathbf{u}\cdot\mathbf{L}_\mathsf{A})^2\ket{\phi}.
\ee
Therefore, in contrast to the ideal QRF-case, the RRFs associated to $P_{\mathbf v}$ and $P_{\mathbf w}$ do not commute in general in the $N \to \infty$ limit. In fact, the commutator is independent of $N$. Note, however, that the expectation value of Eq.~\eqref{eq:commutator SU(2) operator infinite N main} scales as $1/J_\mathsf{A}$, and therefore vanishes in the limit of very large reference frames.

For a reference frame with $J_\mathsf{A}=1/2$, Eq.~\eqref{eq:commutator SU(2) operator infinite N main} reads 
\be
    \label{eq:trivialsu2bound}\bra{\chi}\Big[\widetilde{F}_N(P_\mathbf{v}),\widetilde{F}_N(P_\mathbf{w})\Big]\ket{\chi} \; = \; \frac{i}{36}\mathrm{Tr}\rho_\phi(\mathbf{v}\times\mathbf{w})\cdot\mathbf{\sigma}_{\mathsf{A},R}, \qquad \text{with} \qquad  \rho_\phi=\tr_L{\ketbra{\phi}{\phi}}.
\ee
As shown in Appendix~\ref{app:UR SU(2)}, the right hand side of Eq.~\eqref{eq:trivialsu2bound} may vanish. (In particular, it vanishes for $\ket{\phi} = \ket e $.)
However one can consider the tighter Robertson-Schr\"odinger uncertainty relation instead. Being non-trivial, this bound means that, for finite QRF resources, $J_\mathsf{A} < \infty$, the uncertainties of the observables $\widetilde F_N(P_{\mathbf v})$ and $\widetilde F_N(P_{\mathbf w})$ cannot be made arbitrarily small, even in the limit of infinitely large number of i.i.d. system copies. Indeed, we show that, for $J_\mathsf{A} = 1/2$ and $\ket \phi = \ket e$,
\be \label{eq:relativefrequencyheisenbergrelation}  
    \Delta_{\chi}^2\widetilde{F}_N(P_\mathbf{v})\Delta_{\chi}^2\widetilde{F}_N(P_\mathbf{w}) \; \geq \; \frac{1}{324} \; \approx \; 0.003.   
\ee

\subsubsection{Uncertainty relations for relational expectation values: the Heisenberg-Weyl group}\label{subsec:uncertaintyofexpectation}
 Suppose now that both the QRF $\mathsf A$ and the system $\mathsf a$ are particles on a one-dimensional line (see Subsection~\ref{subsec:heisenbergweyl}). A natural symmetry group is then the Heisenberg-Weyl group, generated by the 
 position and momentum operators $X$ and $P$, $[X,P] = i \hbar$. 
Using Eq.~\eqref{eq:relquadmain}, one sees that, for states $\ket{\chi}=\ket \phi \otimes \ket{\psi_N}$, $\Delta^2(\widetilde F_N(X)) = \Delta^2(X)$ and similarly for $P$. In fact, in the large-$N$ limit, 
\begin{equation}\label{eq:commutatorFXFP}
 \Big[\widetilde F_N( X),\widetilde F_N(P) \Big]  = i\hbar,
\end{equation}
leading to 
\begin{equation}\label{eq:xp-uncertainty-relation}
    \Delta_\Psi^2(\widetilde F_N(X))\Delta_\Psi^2(\widetilde F_N( P)) \geq \frac{\hbar^2}{4},
\end{equation}
reminiscent of the $X$-$P$ Heisenberg uncertainty relation of ordinary quantum theory. The proof is given in Appendix \ref{app:UR Heisenberg}.

It might seem surprising that the uncertainty relation \eqref{eq:xp-uncertainty-relation} is independent of the ``size'' of the reference frame, since one would expect standard commutativity to be recovered for an infinitely resourceful QRF. The reason is that the group here is the abstract Heisenberg–Weyl group, whose elements are dimensionless complex numbers $\alpha$
that displace the system in the phase space by the same parameter regardless of physical properties of the QRF and the systems, such as their masses. To capture how the relative ``size'' of the QRF and system affects the uncertainty relation, we now turn to the centrally extended Galilei group.

\subsubsection{Uncertainty relations for relational expectation values: the centrally extended Galilei group}
Let $\mathsf A$ be a single particle on a line, of mass $m_{\mathsf A}$, acting as a QRF for the one-dimensional centrally-extended Galilei group and let the system of interest $\mathsf a$ be a single particle on a line of mass $m_\mathsf a$. The relevant generators are those of translations, $ P_{\mathsf A}$ on $\mathcal H_\mathsf A$ and $ P_{\mathsf a^N}$ on $\mathcal H_{\mathsf a^N}$, and the generators of boosts, $ K_{\mathsf A}$ and $ K_{\mathsf a^N}$, satisfying
\begin{equation}\label{eq:galileicommutator}
[ P_\mathsf A,  K_\mathsf A] = i \hbar m_\mathsf A ,
\end{equation}
and similarly on $\mathcal H_\mathsf a$.

Because the commutation relation \eqref{eq:galileicommutator} is formally very similar to the canonical one, we can compute the $N \rightarrow \infty$ limit uncertainty relation for $\widetilde F_N( P)$ and  $\widetilde F_N(K)$ by a construction analogous to that used for the Heisenberg-Weyl group. In Appendix \ref{app:UR Galilei}, we show that, as $N \rightarrow \infty$,
\begin{equation}\label{eq:galileicommrel}
    [\widetilde F_N( P_\mathsf{a}), \widetilde F_N(K_\mathsf{a})] = i \hbar \frac{m^2_\mathsf{a}}{m_\mathsf{A}},
\end{equation}
and therefore 
\begin{equation}\label{eq:uncertaintyrelationgalilei}
    \Delta_\chi^2(\widetilde F_N( P_\mathsf{a}))\Delta_\chi^2(\widetilde F_N( K_\mathsf{a})) \geq \frac{\hbar^2}{4}\frac{m^4_\mathsf a}{m^2_\mathsf A}
\end{equation}
for any state $\ket \chi$ on $\mathcal{H}_\mathsf A \otimes \mathcal H^{\otimes N}_\mathsf a$. The proof is given in Appendix~\ref{app:uncertaintyrelations}. The bound \eqref{eq:uncertaintyrelationgalilei} is proportional to $m^4_\mathsf{a}/m^2_\mathsf{A}$, quantifying the non-commutativity of the REVs in terms of the relative ``sizes'' of the QRF and system. When the QRF's resources, as quantified by its mass, far exceeds the system's, i.e. $m_\mathsf{a}/m_\mathsf{A} \rightarrow 0$,  REVs commute as expected, and the variance bound becomes trivial.

\section{Violation of Bell inequalities}
\label{sec:bellinequality}
\noindent 
In contrast to the ideal case, non-ideal QRFs may lead to a finite uncertainty in the RRFs and REVs even in the limit of infinitely many system copies (see Section~\ref{sec:uncertaintyrelations}). In other words, the observed relative frequencies and expectation values may remain ``unsharp'' even in this limit. One might think this uncertainty originates from uncontrollable, unknown disturbances of the preparation or measurement devices and attribute those disturbances to ``hidden variables'' that, when taken into account, explain the observed uncertainties.  
Within this view, one could still assume that there exist ``hidden true expectation values and probabilities'', the 
observed uncertainty merely reflecting our ignorance 
about the functioning of the preparation or measurement device. The resulting representation of the state would be fully consistent with the quantum de Finetti theorem~\eqref{eq:de Finetti}.
However, 
we show next that this view cannot be maintained under very natural assumptions. Namely, we show that the uncertainties in the distribution of RRFs and REVs are instead irreducible in a quantum-mechanical sense. We do so by proving a Bell theorem for relational expectation values and relational relative frequencies, demonstrating that different RRFs and REVs cannot be jointly well defined under the assumptions of the theorem.

\subsection{The setup}\label{subsec:thegeneralsetup}
\begin{figure}[ht]
\centering
\includegraphics[width=0.8\linewidth]{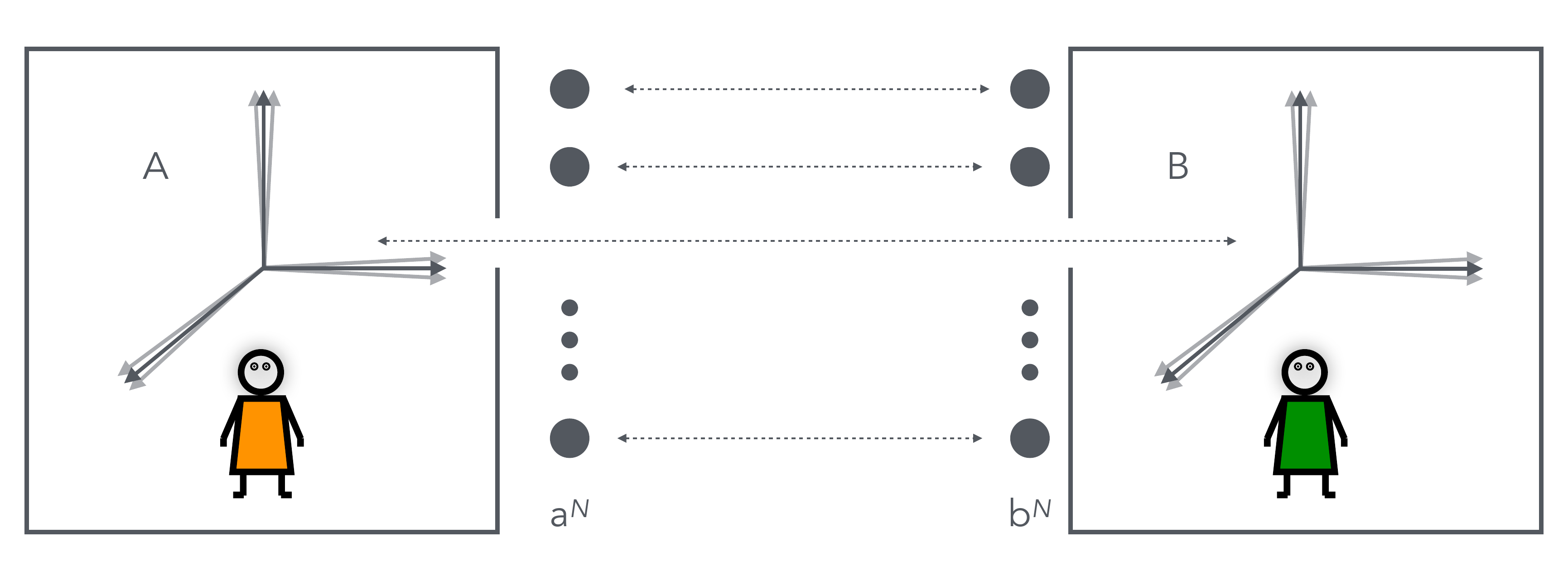}
\caption{\justifying Alice and Bob perform local experiments on $N$ copies of their systems $\mathsf a$ and $\mathsf b$, respectively, measuring relative frequencies and expectation values with respect to their local reference frames $\mathsf A$ and $\mathsf B$, which may be prepared in a quantum state. Because these frames are non-ideal, the relative frequencies and expectation values they measure may have a finite uncertainty even in the limit of arbitrarily large $N$. Here we show that these uncertainties are irreducible in a quantum mechanical sense. That is, the correlations between the RRFs or REVs measured by Alice and Bob may violate Bell inequalities in the large-\(N\) limit.} \label{fig:bellscenario}
\end{figure}
Imagine two observers, Alice and Bob, performing local experiments on $N$ copies of a quantum system, as depicted in 
Fig.~\ref{fig:bellscenario}. In each run, each observer's local measurement consists of choosing a setting and obtaining an outcome, and Alice's and Bob's measurements are assumed to be spacelike separated. 

We assume Alice and Bob 
lack access to any external reference frame for a group $G$. However, each of them has access to a local, possibly non-ideal QRF, denoted \(\mathsf A\) for Alice and \(\mathsf B\) for Bob. These frames are used to determine the relative frequencies and relative expectation values of measurements on the local systems \(\mathsf a\) and \(\mathsf b\). As in the rest of the paper, we consider a single copy of the QRFs $\mathsf A$ and $\mathsf B$ for each $N$ copies of the systems $\mathsf a$ and $\mathsf b$. 
This leads naturally to the PVM construction, where Alice and Bob perform projective measurements corresponding to the REV and the RRF operators (see Section~\ref{rrfs}). 

The total Hilbert space is then
\(
\mathcal H = \mathcal H_{\mathsf A} \otimes \mathcal H_{\mathsf B} \otimes \mathcal H_{\mathsf{a}^N} \otimes \mathcal H_{\mathsf{b}^N},
\)
where \(\mathcal H_{\mathsf{a}^N} \cong \mathcal H_{\mathsf a}^{\otimes N}\), and \(N\) denotes the number of systems \(\mathsf a\) and \(\mathsf b\) (we will later consider the limit \(N \to \infty\)). 
We assume that \(\mathcal H_{\mathsf A} \cong \mathcal H_{\mathsf B} \cong \mathcal H_{\mathrm{QRF}}\) and \(\mathcal H_\mathsf a \cong \mathcal H_\mathsf b \cong \mathcal H_{\mathrm{system}}\).

In their local labs, Alice and Bob can measure either RRFs or REVs. For RRFs, Alice performs PVMs corresponding to observables $\widetilde F_N({P})$, with $P$ a projector on $\mathcal H_\mathsf a$. For REVs, Alice performs PVMs corresponding to observables of the form $\widetilde F_N(A)$, where $A$ is an observable (Hermitian operator) on $\mathcal H_\mathsf a$. Bob proceeds analogously. As both RRFs and REVs have multiple outcomes, to define a Bell experiment we let Alice and Bob measure ``binned'' dichotomic observables built from these multiple-outcome operators. 

We treat the two cases, RRFs and REVs, in a unified way throughout the rest of this section. As in subection~\ref{subsec:heisenbergtradeoff}, we write $ A = \sum_n a_n P_n$, recovering the RRF case when all but one of the $a_n$ vanish and the remaining coefficient is set to unity. Alice then measures a binned observable
\begin{equation}
    \sgn(\widetilde F_N( A) -\mu) = \sum_\lambda \sgn(\lambda -\mu) \Pi_\lambda,
\end{equation}
where $\mu = 1/2$ for RRFs and $0$ for REVs, $\lambda$ are the eigenvalues of $F_N(A)$ and $\Pi_\lambda$ are the corresponding spectral projectors. Here, $\sgn(x) = 1$ for $x > 0$ and $\sgn(x) = -1$ otherwise.

Considering a similar measurement on Bob's side, we define the Bell operator
\begin{equation}
 C\;=\;\sum_{ a, b\,=\,0}^{1}(-1)^{ab}\,E( A_{a}, B_{b}),
\end{equation}
where $E({A}_{ a}, B_{b})\;:=\;
\sgn\!\bigl(\widetilde F_N({A}_{a})-\mu\bigr)\,\otimes\,
\sgn\!\bigl(\widetilde F_N({ B}_{b}) -\mu \bigr)$. Here, $A_{0}$, and $A_1$ specify observables on Alice's side and $ B_0$, and $B_1$ play the same role for Bob. Although 
$C$ is defined in terms of quantum-mechanical operators, its expectation value $\langle C \rangle$ has meaning independently of quantum theory: it is fixed operationally by the experimental results of Alice and Bob. We may therefore ask whether any theory, possibly beyond quantum mechanics, can account for the value of $\langle C \rangle$ in terms of the hidden variables mentioned above -- and in particular, whether a theory satisfying Bell's notion of \emph{local causality} is consistent with Alice's and Bob's observations. 

In a theory satisfying local causality, the joint probabilities admit a decomposition of the form
\begin{equation} \label{eq:localcausalitymain}
p(x,y\mid  a,  b)
=
\int d\lambda\, \rho(\lambda) \, p(x\mid a,\lambda)\, p(y\mid b,\lambda),
\end{equation}
where \( a\) denotes Alice's possible measurement setting, \( b\) is Bob's, and \(x,y = \pm 1\) are the corresponding outcomes of the experiment. The hidden variable is denoted \(\lambda\), and \(\rho(\lambda)\) is its probability distribution. The assumption is that, conditioned on \(\lambda\), Alice's outcome probabilities depend only on Alice's local setting, and Bob's only on Bob's local setting.

 From Eq.~\eqref{eq:localcausalitymain} one can derive Bell-type inequalities~\cite{Bell1964}. Namely, one can show that any local-causal theory satisfies the CHSH inequality~\cite{ClauserHorneShimonyHolt1969}
\begin{equation}
\vert \langle C \rangle \vert \le 2.
\label{eq:CHSH}
\end{equation}
In the following subsections we present two concrete cases -- one for RRFs, one for REVs -- in which there exist operators $A_a$ and $B_{b}$ and a state $\ket{\Psi} \in \mathcal H$ such that 
\begin{equation}
\langle  C \rangle_{\Psi} > 2. 
\end{equation}
This shows that, when measured relative to a non-ideal QRF, RRFs and REVs are incompatible with local causality.

\subsection{Bell's theorem for relational relative frequencies}

 We consider first the case where Alice and Bob measure relative frequencies with respect to their local reference frames. We take the symmetry group to be $\SU(2)$ and the non-ideal QRFs $ \mathsf A$ and $\mathsf B$ are given by the model of Subsection~\ref{subsec:rrfforsu2}. The systems $\mathsf{a}_1,\dots,\mathsf{a}_N$ and $\mathsf{b}_1,\dots,\mathsf{b}_N$ are each a spin-$1/2$ particle. Recall that Alice's measurement of relative frequencies corresponds to a PVM defined by the RRF observable of Eq.~\eqref{eq:niceformofsu2relativefrequencies}.

Alice then binarizes her outcomes according to the dichotomic observable  $\sgn(\widetilde F_N(P_{\mathbf v}) - 1/2)$ for different values of $\mathbf v$. Roughly, this observable counts how many frequencies, in the PVM construction, lie above $1/2$ and how many lie below $1/2$. 

With Bob doing a similar measurement at his end, we define the Bell operator
\begin{equation}
C = \sum_{ a,  b = 0}^1 (-1)^{ ab} E(\mathbf v_{ a}, \mathbf w_{b}),
\end{equation}
where $E(\mathbf v_{a}, \mathbf w_{ b}) = \sgn(\widetilde F_N(P_{\mathbf v_{a}}) - 1/2) \otimes \sgn(\widetilde F_N(P_{\mathbf w_{b}}) - 1/2)$. Here, $\mathbf v_0$ and $\mathbf v_1$ specify directions on Alice's side, while $\mathbf w_0$ and $\mathbf w_1$ do the same on Bob's side. 

Consider the state $\ket{\Psi} = \ket{\Phi}_{\mathsf A \mathsf B}\otimes\ket{\psi, \psi}_{\mathsf a \mathsf b}^{\otimes N}$, where $\ket{\Phi}_\mathsf{\mathsf A \mathsf B}$ is of the form $\ket{\Phi}_\mathsf{AB}=\ket{\phi}_{R_\mathsf{AB}}\ket{\mathbf{\omega},+}_{L_\mathsf{A}}\ket{\mathbf{\omega},+}_{L_\mathsf{B}}$, with the left sectors in an arbitrary product state but the right sectors $R_{\mathsf A \mathsf B} = R_\mathsf A R_\mathsf B$ possibly entangled, and the spin-$1/2$ state $\ket{\psi}$ arbitrary. In Appendix~\ref{app:relationalrelfrequancies} we show that, for all $N$,
\begin{equation}
\bra{\Psi}{C}\ket{\Psi}\;=\;
\sum_{ a, b=0}^{1}(-1)^{ab}\,
\bra{\phi}\sgn(\mathbf v_{ a}\cdot \mathbf R_\mathsf A)\otimes\sgn(\mathbf w_{ b}\cdot \mathbf R_\mathsf B)\ket{\phi}_{R_{\mathsf A \mathsf B}},
\label{eq:CHSH-limit-spins}
\end{equation}
where $\mathbf R_\mathsf A$ is the right-regular angular momentum operator on $\mathcal H_\mathsf A$, and likewise for $\mathsf B$. This reduces the problem to a standard CHSH scenario for two spin-$J$ QRFs, with $J=J_\mathsf{A}=J_\mathsf{B}$. Setting $J = 1/2$, we choose the standard CHSH directions $\mathbf v_0 = \mathbf z = (0,0,1)$ and $\mathbf v_1 = \mathbf x = (1,0,0)$ on Alice's side and $\mathbf w_0 = (1/\sqrt 2)(1,0,1)$ and $\mathbf w_1 = (1/\sqrt 2)(-1,0,1)$ on Bob's. The singlet state
\begin{equation}
\ket{\phi}_{R_{\mathsf A \mathsf B}} = \frac{1}{\sqrt 2}(\ket{ \mathbf z, +}_{R_\mathsf A} \ket{ \mathbf z, -}_{R_\mathsf B} - \ket{\mathbf z, -}_{R_\mathsf A}  \ket{\mathbf z, +}_{R_\mathsf B}), \qquad \mathbf z\cdot \mathbf \sigma \ket{\mathbf z, \pm } = \pm \ket{\mathbf z, \pm},
\end{equation}
leads to $\bra \Psi C \ket \Psi = 2\sqrt2$, thus violating maximally the CHSH inequality for RRFs. 

\subsection{Bell's theorem for relational expectation values}
\label{subsec:bell4revs}
We now turn to the case where Alice and Bob measure expectation values relative to their local QRFs. The symmetry group $G$ is the Heisenberg-Weyl group and the non-ideal QRFs $\mathsf A$ and $\mathsf B$ are realised by particles on a line with position and momentum operators $X$ and $P$. The local systems $\mathsf{a}^N=\mathsf{a}_1,\dots,\mathsf{a}_N$ and $\mathsf{b}^N=\mathsf{b}_1,\dots,\mathsf{b}_N$ are $N$ i.i.d systems of the same kind as the QRFs (see Subsection~\ref{subsec:heisenbergweyl}). Alice and Bob measure binned dichotomic observables $\sgn(\widetilde F(X_\vartheta))$, corresponding to the quadratures $X_\vartheta = \cos \vartheta X + \sin \vartheta P$. 
The Bell operator reads
\begin{equation}
 C\;=\;\sum_{ a, b\,=\,0}^{1}(-1)^{ab}\,E(\vartheta_{ a},\varphi_{ b}),
\end{equation}
where $E(\vartheta_{ a},\varphi_{ b})=
\sgn\bigl(\widetilde F_N( X_{\vartheta_{ a}})\bigr)\otimes
\sgn\bigl(\widetilde F_N( X_{\varphi_{b}})\bigr)$. Here, $\vartheta_0$, and $\vartheta_1$ specify quadratures on Alice's side and $\varphi_0$, and $\varphi_1$ play the same role on Bob's side.

As shown in Appendix~\ref{app:proofofbell}, there exists a joint state of the form $\ket{\Psi} = \ket{\Phi}_{\mathsf A \mathsf B} \otimes \ket{\gamma, \gamma}_{\mathsf a \mathsf b}^{\otimes N}$ such that, as $N \to \infty$
\begin{equation}
\bra{\Psi}{C}\ket{\Psi}\;=\;
\sum_{a, b=0}^{1}(-1)^{ab}\,
\bra{\Phi}\sgn( X_{\vartheta_{ a}})\otimes\sgn( X_{\varphi_{ b}})\ket{\Phi}_{\mathsf A \mathsf B},
\label{eq:CHSH-limit}
\end{equation}
thus reducing the problem to a standard continuous--variable CHSH inequality for quadrature binned observables. It is well known that there exist angles $\vartheta_0$, $\vartheta_1$, $\varphi_0$ and $\varphi_1$ and states $\ket{\Phi}_{AB}$ such that this inequality is violated. 

In particular, following Refs.~\cite{garcia2004proposal,NhaCarmichael2004}, we set $\vartheta_0 =0$, $\vartheta_1 =\pi/2$, $\varphi_0 = -\pi/4$ and $\varphi_1 = \pi/4$ and  
\begin{equation}
\ket{\Phi}_{\mathsf A \mathsf B} \propto \sum_{k =0}^\infty (k + 1) r^k \ket{k}_\mathsf A\otimes \ket{k}_\mathsf B
\end{equation}
where $\ket{k}$ is the Fock basis and $r \approx 0.572$. This state is not optimal for the violation, but it is a feasible one, obtained by subtracting one excitation from each subsystem of a two-mode squeezed state. Applying this choice of settings and state $\ket{\Phi}_{\mathsf A \mathsf B}$ to our problem, we obtain, in the large $N$ limit, a violation of the inequality by $\bra{\Psi} C \ket{\Psi} \approx 2.048$.

\section{Experimental proposal of a Bell experiment for relational expectation values}
\noindent 

\noindent We now present a quantum-optical experimental proposal to implement the Bell experiment for REVs (Subsection~\ref{subsec:bell4revs}), based on local measurements of the continuous-variable relational expectation value observables of Eq.~\eqref{eq:relquadmain}. The
experimental ingredients are standard in pulsed continuous-variable optics: balanced homodyne readout of optical quadratures~\cite{LvovskyRaymer2009,HansenEtAl2001,AppelEtAl2007},
coherent linear-optical transformations on many modes~\cite{ReckZeilingerBernsteinBertani1994,ClementsEtAl2016},
and temporal-mode control of pulsed light~\cite{BrechtReddySilberhornRaymer2015,AnsariEtAl2018,SerinoEtAl2023}. Figure~\ref{fig:pvm-experimental-sequence-main} shows how these measurements are implemented in our quantum-optical proposal; see Appendix~\ref{app:pvm-quantum-optics} for a detailed explanation.

\begin{figure}[t]
    \centering
    \includegraphics[width=0.76\linewidth]{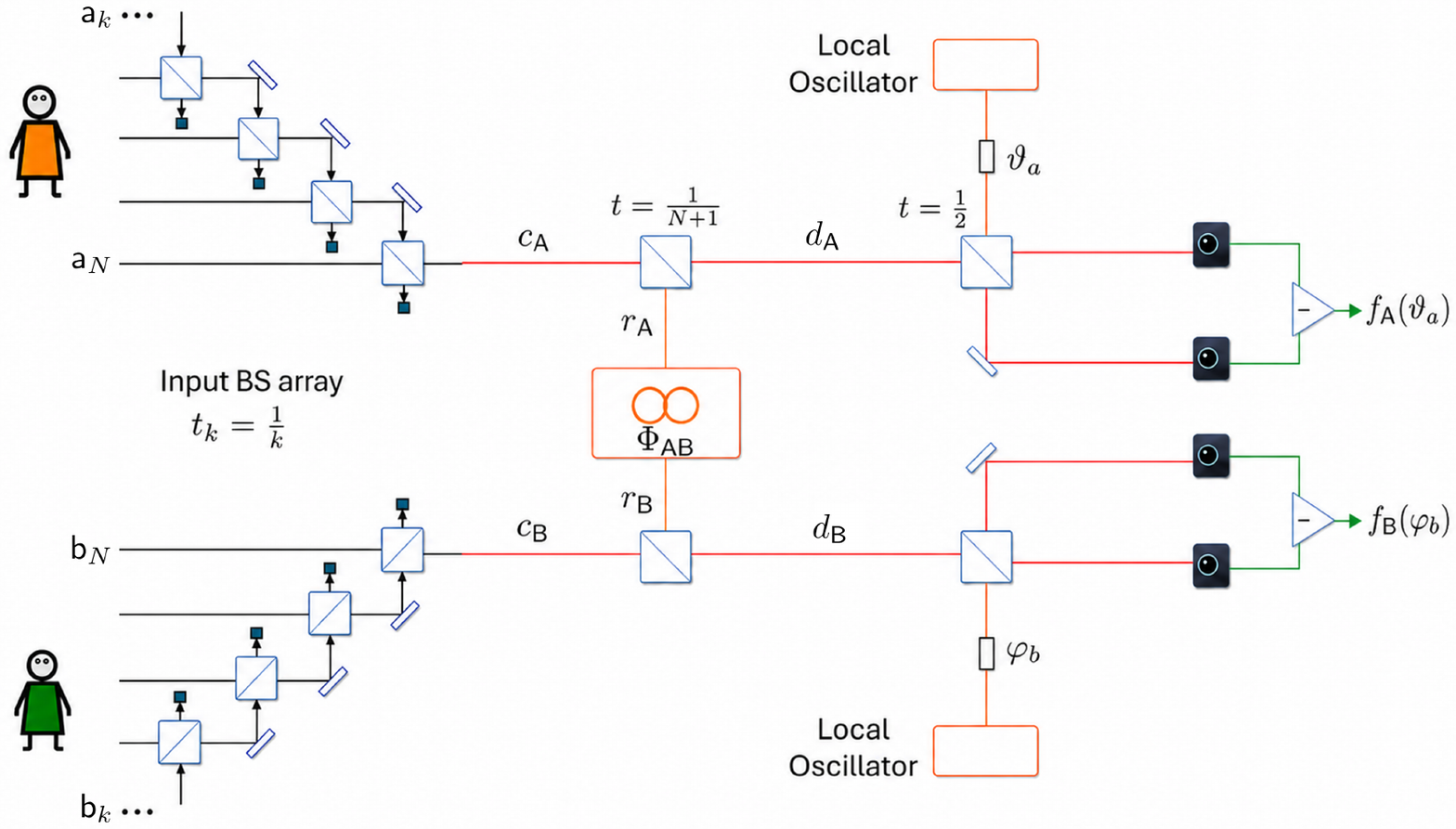}
    \caption{
Quantum-optical implementation of the Bell experiment for relational expectation values. Alice and Bob receive a block of \(N\) signal modes, \(\mathsf{a}_1,\ldots,\mathsf{a}_N\) and \(\mathsf{b}_1,\ldots,\mathsf{b}_N\), respectively. The input beam-splitter arrays coherently combine each block into the collective modes \(c_\mathsf A=N^{-1/2}\sum_{i=1}^N \mathsf{a}_i\) and \(c_\mathsf B=N^{-1/2}\sum_{i=1}^N \mathsf{b}_i\). With the convention used in the figure, the \(k\)-th beam splitter in the input array has intensity transmissivity \(t_k=1/k\), for \(k=2,\ldots,N\), for the newly added input mode into the continuing collective output mode. The collective modes \(c_\mathsf A,c_\mathsf B\) are mixed locally with the quantum-reference-frame modes \(r_\mathsf A,r_\mathsf B\), prepared in the entangled state \(\Phi_{\mathsf A \mathsf B}\). The unbalanced beam splitters are chosen, as in Eq.~(E3), such that the output modes measured by homodyne detection are \(d_\mathsf A=(N+1)^{-1/2}c_\mathsf A-\sqrt{N/(N+1)}\,r_\mathsf A\) and \(d_\mathsf B=(N+1)^{-1/2}c_\mathsf B-\sqrt{N/(N+1)}\,r_\mathsf B\). Thus the transmissivity of the collective signal mode into the measured output port is \(t=1/(N+1)\). The homodyne phases \(\vartheta_a\) and \(\varphi_b\) select the local quadrature settings. Up to the positive rescaling factor \(s_N=\sqrt{1+1/N}\), the measured homodyne outcomes implement the relational quadratures \(N^{-1}\sum_{i=1}^N X_{\mathsf{a}_i}(\vartheta_a)-X_{r_\mathsf{A}}(\vartheta_a)\) and \(N^{-1}\sum_{i=1}^N X_{\mathsf{b}_i}(\varphi_b)-X_{r_\mathsf{B}}(\varphi_b)\). The real-valued outcomes are denoted by \(f_\mathsf{A}(\vartheta_a)\) and \(f_\mathsf{B}(\varphi_b)\), and are sign-binned to form the Bell variables \(A_a=\operatorname{sgn} f_\mathsf{A}(\vartheta_a)\) and \(B_b=\operatorname{sgn} f_\mathsf{B}(\varphi_b)\).
}
    \label{fig:pvm-experimental-sequence-main}
\end{figure}

The non-ideal QRF $\mathsf A$ ($\mathsf B$) is realised by the optical mode $r_\mathsf A$ ($r_\mathsf B$), with quadratures $X_{\vartheta, \mathsf A} = (\mathrm{exp}(-i \vartheta) r_\mathsf A + \mathrm{exp}(i \vartheta) r^\dagger_{\mathsf A})/\sqrt 2$ ($X_{\varphi, \mathsf B} = (\mathrm{exp}(-i \varphi) r_\mathsf B + \mathrm{exp}(i \varphi) r^\dagger_{\mathsf B})/\sqrt 2$), and the two $r_\mathsf A$ and $r_\mathsf B$ may be prepared in a fixed entangled state $\Phi_{\mathsf{AB}}$. The systems under study are realised by blocks of $N$ signal pulses, $\mathsf{a}_1,\dots,\mathsf{a}_N$ for Alice and $\mathsf{b}_1,\dots,\mathsf{b}_N$ for Bob, with quadratures $X_{\vartheta, \mathsf{ a}_i} =  (\mathrm{exp}(-i \vartheta) \mathsf{a}_i + \mathrm{exp}(i \vartheta) \mathsf{a}^\dagger_{i})/\sqrt 2$ and $X_{\varphi, \mathsf{b}_i} =  (\mathrm{exp}(-i \varphi) \mathsf{b}_i + \mathrm{exp}(i \varphi) \mathsf{b}^\dagger_{i})/\sqrt 2$. The pulses are prepared in an i.i.d. state $\ket{\alpha}^{\otimes N}$. As in the rest of the paper, both Alice and Bob use only one copy of their QRF per $N$ copies of their local systems. This means that the same preparation of the QRF is used for all relative quadrature measurements; the choices of $X_\vartheta$ and $X_\varphi$ are made only by the homodyne readout phase. 
The goal for Alice (Bob) is to do a projective measurement corresponding to the REV operator $\widetilde F_{N, \mathsf A}(X_\vartheta)$ ($\widetilde F_{N, \mathsf B}(X_\varphi)$). These operators can be expressed as ``relational quadratures''  
\begin{equation}\label{eq:relationalquadratures}
\widetilde F_{N, \mathsf A}(X_\vartheta) \; = \; \frac{1}{N}\sum_{i =1}^N X_{\mathsf a_i}(\vartheta) - X_\mathsf A(\vartheta), \qquad \widetilde F_{N, \mathsf B}(X_\varphi) \; = \; \frac{1}{N}\sum_{i =1}^N X_{\mathsf b_i}(\varphi) - X_\mathsf B(\varphi).
\end{equation}

The implementation is the following. After receiving their signal modes, Alice and Bob respectively prepare the collective signal mode
\begin{equation}
     c_\mathsf{A}
    :=
    \frac1{\sqrt N}\sum_{i=1}^N \mathsf{a}_i,
    \qquad
    c_\mathsf{B}
    :=
    \frac1{\sqrt N}\sum_{i=1}^N  \mathsf{b}_i.
    \label{eq:bipartite-collective-modes-main}
\end{equation}
Each collective mode $c_\mathsf{A}$, $c_\mathsf{B}$ is then mixed with the corresponding QRF
mode $r_\mathsf{A}$, $r_\mathsf{B}$, respectively, on an unbalanced beamsplitter (see Fig.~\ref{fig:pvm-experimental-sequence-main}). Then,  as shown in Appendix~\ref{app:pvm-quantum-optics}, a homodyne phase \(\vartheta\) on
Alice's side and \(\varphi\) on Bob's side implements the local PVMs of the relational quadratures of Eq.~\eqref{eq:relationalquadratures} up to a common positive rescaling factor $s_N$. 
Since this factor is positive, it does not affect sign binning.

Then, for each experimental run, Alice and Bob choose local homodyne phases
\(\vartheta_a\) and \(\varphi_b\), record the real-valued outcomes corresponding to their relational PVM,
\(f_\mathsf{A}(\vartheta_a)\) and \(f_\mathsf{B}(\varphi_b)\), and binarize the outcomes as
\begin{equation}
    A_a:=\sgn f_\mathsf{A}(\theta_a),
    \qquad
    B_b:=\sgn f_\mathsf{B}(\phi_b).
    \label{eq:bipartite-binarization}
\end{equation}
The correlators \(E_{ab}:=\langle A_aB_b\rangle\) are then combined into the
CHSH expression. Thus, for a fixed $N$ the optical module implements, on each side, the local relational PVM
assumed in the Bell construction of Subsection~\ref{subsec:bell4revs}. As $N$ grows, the proposed experiment approaches the theoretical analysis.

\section{Discussion}
\noindent Historically, successive physical theories have often extended their predecessors while recovering them as limiting or special cases. At the same time, the formal structure of prediction has grown progressively weaker. In classical mechanics, specifying a state allows, in principle, a deterministic prediction of the outcomes of consecutive measurements. In classical statistical mechanics, predictions rest on classical probabilities that reflect incomplete knowledge of the underlying state \cite{Laplace1814Probabilities,Gibbs1902StatisticalMechanics}. In quantum mechanics, the predictive content is given by quantum probabilities assigned to measurement outcomes through the Born rule \cite{Born1926Quantenmechanik}. Bell and Kochen--Specker type results show that these probabilities cannot, in general, be interpreted as ignorance about pre-existing values assigned noncontextually to all observables \cite{Bell1964,KochenSpecker1967}.

This sequence suggests a natural question. If classical statistical mechanics replaces deterministic predictions by classical probability distributions, and quantum mechanics replaces those distributions by quantum probability assignments, might a yet deeper theory further weaken the status of probabilities themselves? The question is closely related to the sense in which quantum theory might itself be a special case of a broader framework. In recent years, the problem has been addressed through reconstructions of quantum theory from physically motivated axioms~\cite{Hardy2001QuantumTheoryAxioms,DakicBrukner2011QuantumTheory,MasanesMuller2011Derivation,ChiribellaDArianoPerinotti2011Informational,HohnWever2017QuantumTheoryQuestions}, often formulated within the framework of generalized probabilistic theories~\cite{Barrett2007GPT}. Yet while several such reconstructions exist, no comparably complete nonclassical and nonquantum theory with a rich dynamical structure is known. The gap suggests that, as long as a system's state is identified with a complete set of sharp probabilities for all admissible measurements, a physically compelling extension of quantum theory may be hard to obtain.

The present paper realizes this possibility in a concrete operational setting within the quantum framework.
We start from the standard connection between Born probabilities and
relative frequencies. For ideal repetitions of an experiment, the relative
frequency operator \(F_N(P)\) associated with a projector \(P\) becomes sharp on product states in the large-\(N\) limit, with limiting value given by the Born probability. Ordinary relative frequencies for different measurements are therefore jointly well defined in that limit, even when the underlying single-system observables are incompatible. We then relax the
implicit assumption that the reference frame defining the measurement is ideal. Once the reference frame is itself treated as a finite quantum system,
the operationally accessible quantity is not \(F_N(P)\), but its relational,
\(G\)-twirled counterpart \(\widetilde F_N(P)\). We showed that such
relational relative frequency operators need not commute, even as \(N\to\infty\). Thus the limiting relative frequencies themselves can become incompatible observables. Although we focus in the limit $N \to \infty$, it is important to note that, in any realistic case where $N$ is large but  finite, the total number of degrees of freedom, including system copies and reference frames, is bounded. Operationally, one may then choose how to divide these degrees of freedom between reference frames and systems; however, the consequences of our analysis will persist. 

This non-commutativity has several consequences. First, it yields uncertainty relations for relational relative frequencies and expectation values: finite reference frames imply that the relative frequencies associated with incompatible measurements cannot both be made arbitrarily sharp. In the explicit examples above, the resulting lower bounds are controlled by the finite resources of the frame and vanish in the corresponding ideal limit. Second, we distinguish two operational constructions. In the PVM construction, one measures the spectral projectors of the Hermitian RRF operator itself; in the POVM construction, one instead relativises the full finite-\(N\) frequency POVM. Their first moments agree, but higher moments need not: the (m)-th moment of the measured POVM outcomes is not generally the expectation value of \(\widetilde F_N(E)^m\). Moreover, in the PVM construction, we derive the corresponding outcome statistics and show that the measured variance separates into an ordinary \(1/N\) sampling term and a residual reference-frame contribution that can survive in the large-\(N\) limit. These effects could be observed in a homodyne proposal -- a concrete quantum-optical route to detecting the predicted residual block-to-block fluctuations.

Most importantly, by relaxing the notion of probability itself and allowing for ``probabilities of probabilities'', we obtain a framework that can still be expressed in the mathematical language of quantum theory -- Hilbert spaces, states, operators, and POVMs -- yet goes beyond the textbook idealization in which probabilities are sharp quantities defined relative to ideal, infinitely resourceful reference frames. Here, the uncertainty in assigning sharp probabilities to a system is not merely ignorance about underlying ``true'' quantum probabilities. The Bell-type argument shows that the limiting probabilities for relative frequencies cannot in general be completed by pre-existing sharp values. Their indefiniteness is therefore fundamental and irreducible -- in the same operational sense in which ordinary quantum probabilities cannot be reduced to ignorance about pre-existing measurement outcomes.

A natural place where the assumption of finite reference-frame resources
becomes unavoidable is quantum gravity~(see, e.g. \cite{rovelli1991observable,rovelli1991quantum,Giddings_2006, girelli2008quantum}). A recurring lesson from black-hole
thermodynamics and holography is that a bounded spacetime region is expected
to contain only a finite amount of physically accessible information
\cite{Bekenstein1973BlackHolesEntropy,Hawking1975ParticleCreation,Hooft1993DimensionalReduction,Susskind1995WorldHologram,Bousso2002HolographicPrinciple}. 
The strongest indication is the Bekenstein--Hawking area law: the maximal
entropy associated with a region enclosed by a surface of area \(A\) is
\(S_{\max}=A/(4l_P^2)\), where \(l_P=\sqrt{G\hbar/c^3}\) is the Planck
length. Equivalently, the maximal information content measured in bits is
\(I_{\max}=A/(4l_P^2\ln 2)\). If this bound is translated into an effective
Hilbert-space dimension, one obtains
\(\dim\mathcal H_{\max}=2^{I_{\max}}=\exp(A/(4l_P^2))\). Thus a finite
boundary area suggests a finite number of physically distinguishable degrees
of freedom.

Operationally, the bound means that a finite laboratory region cannot contain arbitrarily many perfectly distinguishable memories, clocks, reference frames, or measurement records. Attempts to store ever increasing amounts of information in a fixed region are expected to produce significant gravitational back-action. The holographic entropy bound thus supports the view that non-ideal reference frames are not merely a practical consequence of using finite resources inside an otherwise infinitely resourceful spacetime region. Rather, their non-ideality is fundamental and irreducible: a bounded region has only finite information capacity and therefore cannot support an ideal, infinitely sharp frame. In the presence of finite resources, then, the concept of quantum state -- and, more generally, of a quantum measurement and quantum operation -- may have to be extended to incorporate the full probability distribution of relational relative frequencies; and in scenarios involving finite spacetime regions in quantum gravity, an even more radical revision of the notion of state may be required.

\begin{acknowledgments}
{\v C}.B. thanks Borivoje Daki\'c for years-long discussions on how the notion of predictability can be extended beyond probabilities. {\v C}.B. acknowledges financial support from the Austrian Science Fund
(FWF) [10.55776/COE1] (Quantum Science
Austria) and [10.55776/F71] (BeyondC), as well as the ID\#~62312 `Quantum Information Structure of Spacetime (QISS)' and ID\#~63683  `Without Spacetime (WOST)' grants from the John Templeton Foundation. 

\end{acknowledgments}

\bibliography{PR}

\appendix
\section{Uncertainty relations for RRFs}
\label{app:uncertaintyrelations}
\subsection{Heisenberg-Weyl} \label{app:UR Heisenberg}
Here we derive the large-$N$ commutation relations~\eqref{eq:commutatorFXFP} and~\eqref{eq:galileicommrel}. 

We start with Eq.~\eqref{eq:commutatorFXFP} and the case of two particles on a line. The Hilbert space is $\mathcal{H} = \mathcal H_{\mathsf A} \otimes \mathcal H^{\otimes N}_{\mathsf a}$, where $\mathsf{A}$ labels the QRF particle and $\mathsf{a}$ labels a single system particle. For simplicity we identify the Hilbert spaces by their ordering, omitting the labels.

By definition 
\begin{align}
\widetilde F_N(X) =& \int \frac{\mathrm{d}^2 \alpha}{\pi} \, \ketbra{\alpha}{\alpha} \otimes D_\alpha^{\otimes N} F_N (X)  D_\alpha^{\dagger \otimes N} \label{eq:tildeF_NX},\\
\widetilde F_N(P) =& \int \frac{\mathrm{d}^2 \alpha}{\pi} \, \ketbra{\alpha}{\alpha} \otimes D_\alpha^{\otimes N} F_N (P)  D_\alpha^{\dagger \otimes N} \label{eq:tildeF_NP},
\end{align}
where $X$ and $P$ are the dimensionless position and momentum operators, respectively. Using the linearity of $F_N(X)$ and $F_N(P)$ together with $D_\alpha X D^\dagger_\alpha = X - \sqrt 2 \mathrm{Re}(\alpha) \id \  \text{and} \ 
D_\alpha P D^\dagger_\alpha = P - \sqrt 2 \mathrm{Im}(\alpha) \id$, and the resolution of identity for coherent states, Eqs.~\eqref{eq:tildeF_NX} and~\eqref{eq:tildeF_NP} simplify to 
\begin{align}
\widetilde F_N(X) =& - \int \frac{\mathrm{d}^2 \alpha}{\pi} \, \sqrt{2} \mathrm{Re}(\alpha) \, \ketbra{\alpha}{\alpha} \otimes \id + \id \otimes F_N(X) \\
\widetilde F_N(P) =& - \int \frac{\mathrm{d}^2 \alpha}{\pi} \, \sqrt{2} \mathrm{Im}(\alpha) \, \ketbra{\alpha}{\alpha} \otimes \id + \id \otimes F_N(P).
\end{align}
The only term in the commutator~\eqref{eq:commutatorFXFP} containing either $F_N(X)$ or $F_N(P)$ is $[\id \otimes F_N(X), \id \otimes F_N(P)]$. This term scales as $1/N$, so it vanishes in the limit $N \rightarrow \infty$. We thus concentrate only in the term that is non-trivial in the reference frame.

Now, one can see that
\begin{align}
\int \frac{\mathrm{d}^2 \alpha}{\pi} \, \sqrt{2} \mathrm{Re}(\alpha) \, \ketbra{\alpha}{\alpha}=& \; X, \label{eq:coherentstateidentityX}\\
\int \frac{\mathrm{d}^2 \alpha}{\pi} \, \sqrt{2} \mathrm{Im}(\alpha) \, \ketbra{\alpha}{\alpha}=& \; P, \label{eq:coherentstateidentityP} 
\end{align}
as consequences of well-known identities of coherent states (see e.g. Ref.~\cite{cahill1969ordered}). For completeness, in the following we give an elementary proof of~\eqref{eq:coherentstateidentityX}. A completely analogous proof holds for~\eqref{eq:coherentstateidentityP}.

Recall the resolution of the identity for coherent states,
\begin{equation}
    \frac{1}{\pi}\int \mathrm d^2 \alpha  \, \lvert\alpha\rangle\langle\alpha\rvert \, = \id,
    \label{eq:resolution}
\end{equation}
where $\mathrm d^2 \alpha = \mathrm d\,\mathrm{Re}(\alpha)\, \mathrm d\,\mathrm{Im}(\alpha)$.
Writing $\mathrm{Re}(\alpha) = (\alpha + \alpha^*)/2$, we have
\begin{equation}
    \frac{\sqrt{2}}{\pi}\int \mathrm d^2 \alpha \;\mathrm{Re}(\alpha)\,
    \lvert\alpha\rangle\langle\alpha\rvert
    = \frac{\sqrt{2}}{2\pi}\int \mathrm d^2 \alpha\,\alpha\,
    \lvert\alpha\rangle\langle\alpha\rvert
    + \frac{\sqrt{2}}{2\pi}\int \mathrm d^2 \alpha\,\alpha^*\,
    \lvert\alpha\rangle\langle\alpha\rvert.
    \label{eq:split}
\end{equation}
We evaluate each integral by inserting the Fock-state expansion
$\lvert\alpha\rangle = e^{-\lvert\alpha\rvert^2/2}\sum_{n=0}^{\infty}
\frac{\alpha^n}{\sqrt{n!}}\lvert n\rangle$ and using the standard Gaussian
integral
$\frac{1}{\pi}\int \alpha^{*m} \alpha^n\,
e^{-\lvert\alpha\rvert^2}\, \mathrm d^2 \alpha = n!\,\delta_{mn}$.
A direct computation then gives
\begin{equation}
    \frac{1}{\pi}\int \mathrm d^2 \alpha\,\alpha\,
    \lvert\alpha\rangle\langle\alpha\rvert
    = \sum_{n=0}^{\infty}\sqrt{n+1}\,\lvert n\rangle\langle n+1\rvert
    = a,
    \qquad
    \frac{1}{\pi}\int \mathrm d^2 \alpha\,\alpha^*\,
    \lvert\alpha\rangle\langle\alpha\rvert
    = a^\dagger.
\end{equation}
Substituting these expressions into Eq.~\eqref{eq:split} and recalling the definition
$X = (a+a^{\dagger})/\sqrt{2}$, we obtain
\begin{equation}
    \frac{\sqrt{2}}{\pi}\int \mathrm d^2 \alpha\;\mathrm{Re}(\alpha)\,
    \lvert\alpha\rangle\langle\alpha\rvert
    = \frac{\sqrt{2}}{2}\bigl(a + a^{\dagger}\bigr)
    = X. \qedhere
\end{equation}
Therefore, 
\begin{align}
\widetilde F_N(X) \; =& \;  \id \otimes F_N(X) - X \otimes \id \label{eq:frequencyX}\\
\widetilde F_N(P) \; =& \; \id \otimes F_N(P) - P \otimes \id. \label{eq:frequencyP}
\end{align}
and
\begin{equation}
\underset{N \rightarrow \infty}{\mathrm{lim}}  [\widetilde F_N(X),\widetilde F_N(P)] = [X \otimes \id, P \otimes \id] = i .
\end{equation}
Going back to units with $\hbar$ we obtain Eq.~\eqref{eq:commutatorFXFP}.
\subsection{Galilei group in 1D} \label{app:UR Galilei}

We now proceed with the centrally-extended Galilei group in one dimension and the commutator~\eqref{eq:galileicommrel}. Given the similarities between the Heisenberg-Weyl and the centrally-extended Galilei group, we can follow here an analogous construction to the one above. In the following we construct ``coherent states'' for the mass-$m$ irreducible representation of the centrally-extended Galilei group.

The one-dimensional algebra is generated by the momentum $P$, the boost generator $K = P t - m X$, and the Hamiltonian $ H$. As we do not consider dynamics, we concentrate only on $P$ and $K$. Their commutation relation reads 
\begin{equation}
[P,K] = i m 
\end{equation}
Here $m$ is the mass, appearing as a central charge. 
We set $\hbar = 1$ throughout this section.

Exploiting the formal analogy between $[P,K] = im$ and $[X,P] = i$ in the Heisenberg--Weyl algebra, we define
\begin{equation}
a = \tfrac{1}{\sqrt{2}}\bigl(P + iK\bigr), \qquad a^\dagger = \tfrac{1}{\sqrt{2}}\bigl(P - iK\bigr),
\end{equation}
which satisfy
\begin{equation}
[a, a^\dagger] = m.
\end{equation}
Note that the mass $m$ is \emph{not} absorbed into the normalisation; this choice is essential for the physical consequences below. The inverse relations are
\begin{equation}
P = \tfrac{1}{\sqrt{2}}(a + a^\dagger), \qquad K = \tfrac{i}{\sqrt{2}}(a^\dagger - a).
\end{equation}
We define the ``vacuum'' seed state $\ket{0}$ by $a\ket{0} = 0$. In particular, at $t=0$, when the boost generator is $K(0) =  -mX$, the position representation of this state reads
\begin{equation}
\psi_0(x) \propto \exp\!\left(-\tfrac{m}{2}\,x^2\right) \qquad 
\text{ with variance } \qquad \Delta x = \frac{1}{\sqrt{2m}}.
\end{equation}
Note that the width $\Delta x$ shrinks as $m \to \infty$.

We can then construct Galilei coherent states as group displacements of the seed state,
\begin{equation}
\ket{\alpha} = D(\alpha)\ket{0}, \qquad D(\alpha) = \exp\!\bigl(\alpha\, a^\dagger - \alpha^*\, a\bigr).
\end{equation}
Equivalently, parametrising $\alpha = -(\eta + i\xi)/\sqrt{2}$ in terms of a translation $\xi$ and a boost velocity $\eta$, we have
\begin{equation}
D(\alpha) = \exp\!\bigl[-i\,(\xi P + \eta K)\bigr],
\end{equation}
so that coherent states are obtained by acting on the vacuum with a combined translation and boost.

The coherent states are eigenstates of $a$ with eigenvalue $m\alpha$,
\begin{equation}
a\ket{\alpha} = m\alpha \ket{\alpha},
\end{equation}
and obey the (mass-dependent) overlap
\begin{equation}
\langle \beta \vert  \alpha \rangle = \exp\!\left[\, m\!\left(\beta^*\alpha - \tfrac{1}{2}\vert \alpha\vert ^2 - \tfrac{1}{2}\vert \beta\vert ^2\right)\right],
\end{equation}
whose modulus squared
\begin{equation}
\bigl\vert \langle \beta \vert  \alpha \rangle\bigr\vert ^2 = \exp\!\bigl[-m\,\vert \alpha - \beta\vert ^2\bigr]
\end{equation}
vanishes for any $\alpha \neq \beta$ in the limit $m \to \infty$. The resolution of the identity reads
\begin{equation}
\id = \frac{m}{\pi} \int \mathrm d^2 \alpha\; \ket{\alpha}\!\bra{\alpha},
\end{equation}
which exhibits a normalisation factor $m/\pi$ playing the role of the usual $1/\pi$ in the Heisenberg-Weyl case.

From $D^\dagger a D = a + m\alpha$ and $D^\dagger a^\dagger D = a^\dagger + m\alpha^*$, the displacement acts on the generators as
\begin{align}
D(\alpha)\, P \, D(\alpha)^\dagger &= P - m\sqrt{2}\,\mathrm{Re}\,\alpha, \\
D(\alpha)\, K \, D(\alpha)^\dagger &= K - m\sqrt{2}\,\bigl(t\,\mathrm{Re}\,\alpha + \mathrm{Im}\,\alpha\bigr).
\end{align}

Consider now two particles, the QRF $\mathsf A$ and a single system $\mathsf a$. The Hilbert space is $\mathcal{H} = \mathcal H_{\mathsf A} \otimes \mathcal H_{\mathsf a}$, where $\mathsf A$ labels the QRF and $\mathsf a$ labels a single system. Both $\mathsf A$ and $\mathsf a$ carry irreducible representations of the centrally-extended Galilei group with masses $m_{\mathsf A}$ and $m_{\mathsf a}$, respectively, and have their own coherent states $\ket{\alpha}_{\mathsf A}$ and $\ket{\alpha}_{\mathsf a}$ (with the corresponding mass in the commutator $[a,a^\dagger] = m$). 

As for a single particle on a line, the following integrals on the QRF $\mathsf A$ will be needed:
\begin{equation}\label{eq:rmoments}
\frac{m_{\mathsf A}}{\pi}\!\int \mathrm d^2 \alpha\;\sqrt{2}\,\mathrm{Re}\,\alpha\; \ket{\alpha}\!\bra{\alpha}_{\mathsf A} = \frac{P_{\mathsf A}}{m_{\mathsf A}}, \qquad
\frac{m_{\mathsf A}}{\pi}\!\int \mathrm d^2 \alpha\;\sqrt{2}\,\mathrm{Im}\,\alpha\; \ket{\alpha}\!\bra{\alpha}_{\mathsf A} = \frac{K_{\mathsf A}(0)}{m_{\mathsf A}},
\end{equation}
where $ K_{\mathsf A}(0) = -m X_{\mathsf A}$ denotes the boost generator of QRF $\mathsf A$ at $t =0$.

Define the integrals
\begin{equation}
I_{P} = \frac{m_{\mathsf A}}{\pi} \int \mathrm d^2 \alpha \; \ket{\alpha}\!\bra{\alpha}_{\mathsf A} \otimes D(\alpha)\, P_{\mathsf a}\, D(\alpha)^\dagger,
\end{equation}
\begin{equation}
I_{K} = \frac{m_{\mathsf A}}{\pi} \int \mathrm d^2 \alpha \; \ket{\alpha}\!\bra{\alpha}_{\mathsf A} \otimes D(\alpha)\, K_{\mathsf a}\, D(\alpha)^\dagger.
\end{equation}
Using the displacement identities above and Eq.~\eqref{eq:rmoments}, we obtain
\begin{equation}\label{eq:integralP}
\; I_{P} = P_{\mathsf a} - \frac{m_{\mathsf a}}{m_{\mathsf A}}\, P_{\mathsf A} \;
\end{equation}
and, expanding $K_{\mathsf a} = P_{\mathsf a} t - m_{\mathsf a} X_{\mathsf a}$ and $K_{\mathsf A} = P_{\mathsf A} t - m_{\mathsf A} X_{\mathsf A}$, we get
\begin{equation}\label{eq:integralK}
\; I_{K} = K_{\mathsf a} - \frac{m_{\mathsf a}}{m_{\mathsf A}}\, K_{\mathsf A}, \;
\end{equation}
an operator identity valid for all $t$.

We are now ready to compute the final result. Consider $N$ copies of the system $\mathsf a$ and the Hilbert space $\mathcal H = \mathcal H_{\mathsf A} \otimes \mathcal H_{\mathsf a}^{\otimes N}$. The RRF operators on $\mathcal H$ we are interested in are
\begin{align}
\widetilde F_N(P) =& \frac{m_{\mathsf A}}{\pi} \int \mathrm{d}^2 \alpha \, \ketbra{\alpha}{\alpha} \otimes D_\alpha^{\otimes N} F_N (P)  D_\alpha^{\dagger \otimes N} \label{eq:tildeF_NX1}, \\
\widetilde F_N(K) =& \frac{m_{\mathsf A}}{\pi} \int \mathrm{d}^2 \alpha \, \ketbra{\alpha}{\alpha} \otimes D_\alpha^{\otimes N} F_N (K)  D_\alpha^{\dagger \otimes N} \label{eq:tildeF_NP1}.
\end{align}
Expanding $F_N (P)$ and $F_N (K)$ in the equations above and using the integrals $I_{P}$ and $I_{K}$ of Eqs.~\eqref{eq:integralP} and~\eqref{eq:integralK}, we can reduce the commutator~\eqref{eq:galileicommutator} to commutators of the generators on the systems $\mathsf a$ and on the QRF $\mathsf A$. As for the Heisenberg-Weyl group, the terms with non-trivial commutators on the systems $\mathsf a$ vanish in the limit $N \rightarrow \infty$ and only the non-trivial commutators in the QRF $\mathsf A$ contribute. Therefore, as $N \rightarrow \infty$, we obtain
\begin{equation}
[\widetilde F_N(P), \widetilde F_N(K)] = \frac{m_{\mathsf a}^2}{m_{\mathsf A}^2} [P_{\mathsf A}, K_{\mathsf A}] = i \frac{m^2_{\mathsf a}}{m_{\mathsf A}},
\end{equation}
which is Eq.~\eqref{eq:galileicommrel}.

\subsection{SU(2)} \label{app:UR SU(2)}

Let $G=\SU(2)$ and $\ket \psi=\alpha\ket{0}+\beta\ket{1}$ be a qubit. Then, $\ket{\psi_N}=\vert \Omega_{\mathbf{u}}^{N/2}\rangle$ is a spin-$\frac{N}{2}$ coherent state aligned along a direction $\mathbf{u}$. Choosing a projector $P_{\mathbf{v}}=\frac{1}{2}(\id + \mathbf{v}\cdot\mathbf{\sigma})$, for a unit vector $\mathbf v$, we can write the relative frequency operator as
\bea
    F_N(P_{\mathbf{v}}) &= \frac{1}{2}\id^{\otimes N} + \frac{1}{N}\sum_i\frac{1}{2}(\mathbf{v}\cdot\mathbf{\sigma})^{(i)} \\
    & = \frac{1}{2}\id^{\otimes N} + \frac{1}{N}\hat{S}_{\mathbf{v}}.
\eea
where $\hat{S}_{\mathbf{v}}$ is the projection of the spin-$\frac{N}{2}$ operator along $\mathbf{v}$.
Writing $\mathbf{v}=\cos(\theta_{uv})\mathbf{u}+\sin(\theta_{uv})\mathbf{u}^\perp$ we calculate the action on $\ket{\psi_N}$ as
\be
    F_N(P_{\mathbf{v}})\ket{\psi_N}=F_N(P_{\mathbf{v}})\vert \Omega_{\mathbf{u}}^{N/2}\rangle=\frac{1}{2}(1+\cos(\theta_{uv}))\vert \Omega_{\mathbf{u}}^{N/2}\rangle + \frac{\sin(\theta_{uv})}{N}\hat{S}_{\mathbf{u}^\perp}\vert \Omega_{\mathbf{u}}^{\perp\,N/2}\rangle
\ee
and so $\langle F_N(P_{\mathbf{v}})\rangle_{\psi_N}=\bra{\psi}P_{\mathbf{v}}\ket{\psi}$.
So the relative frequency operator can be rewritten as a collective spin operator on the $N$-copies of the qubits.

Consequently, the relational relative frequency operator can be understood as measuring the relative orientation between the QRF $\mathsf A$ and the $N$-copies of the system $\mathsf a$, taken as a single collective spin:
\be
    \widetilde{F}_N(P_{\mathbf{v}}) = \frac{1}{2}\id_{\mathsf A \mathsf a^N} + \frac{1}{N}\widetilde{S}_{\mathbf{v}},
\ee
with
\be\label{eq:G-twirled J_v}
    \widetilde{S}_{\mathbf{v}} = \int_{\SU(2)}\rd g \,E(g)\otimes g\triangleright S_\mathbf{v} = \sum_{l,k}v_k\l(\int_{\SU(2)}\rd g \, E(g)\mathcal R_{lk}(g)\r) \otimes S_{\mathsf{a}^N,l},
\ee
where $\mathcal R_{lk}(g)$ are the elements of the  $\SU(2)$ rotation matrix $\mathcal R$. We consider a single $J$-sector of the regular representation of $\SU(2)$. By the Peter-Weyl theorem, the Hilbert space of $\mathsf A$ is then spanned by vectors of the form
\be \label{eq:regrep}
    \ket{g} = \sqrt{d_{J}}\sum_{m,n}D^{J}_{mn}(g)\,\ket{J,m}_L\otimes\ket{J,n}^*_R.
\ee
where $D^{J}(g)$ is the Wigner D-matrix of spin $J$, $d_{J}=2J+1$, the subscripts $L$ and $R$ denote respectively the left-regular and right-regular tensor factors of the reference frame. Eq.~\eqref{eq:regrep} is the vectorised version of a general $\SU(2)$ rotation matrix $U_{J}(g)=\sum_{m,n}D^{J}_{mn}(g)\ketbra{J,n}{J,m}$. Then,
\be
    E(g) = \ketbra{g}{g}, \qquad \text{ and } \qquad \int_{\SU(2)}E(g)\rd g = \id_\mathrm{\mathsf A}
\ee
Using the Wigner-Eckart theorem, \eqref{eq:G-twirled J_v} can be rewritten as
\be \label{eq:weckartresult}
    \widetilde{S}_\mathbf{v}=\alpha_{J}\sum_{kl}v_k {L}_{\mathsf{A},l}\otimes R_{\mathsf{A},k}\otimes S_{\mathsf{a}^N,l}.
\ee
where the operators $L$ and $R$ are the spin operators acting on the left- and right-regular tensor factors.
Integrating~\eqref{eq:weckartresult} with respect to $v$ and equating its trace with that of~\eqref{eq:G-twirled J_v} gives $\alpha_{J}=\frac{1}{J(J+1)}$ and so
\be\label{eq:RRF for SU(2)}
    \widetilde{F}_N(P_{\mathbf{v}}) = \frac{1}{2}\id_{\mathsf A \mathsf a^N} + \frac{1}{N}\frac{1}{J(J+1)}\sum_l L_{\mathsf{A},l}(\mathbf{v}\cdot{\mathbf{R}_{\sf A}})\otimes {S}_{\mathsf{a}^N,l}
\ee
which, inverting the place of the left-regular and right-regular representations in the tensor product can be rewritten as
\be\label{eq:RRF for SU(2) better order}
    \widetilde{F}_N(P_{\mathbf{v}})=\frac{1}{2}\id_{\mathsf A \mathsf a^N} + \frac{1}{N}\frac{1}{J(J+1)}\mathbf{v}\cdot {\mathbf{R}_{\mathsf A}}\otimes {\mathbf{L}_{\mathsf A}}\cdot {\mathbf{S}}_{\mathsf a^N}
\ee 
The expectation value of this operator in a state $\ket{\chi} = \ket{ e}\otimes\ket{\psi}^{\otimes N}$ (where $\ket{\psi}$ is a qubit along direction $\mathbf{u}$) gives (in the following, we omit Hilbert space indices for simplicity of notation)
\be\label{eq:expectation value RRF SU(2)}
    \langle\widetilde{F}_N(P_{\mathbf{v}})\rangle_\chi = \frac{1}{2} + \frac{1}{2}\frac{1}{J(J+1)}\bra{e}\mathbf{v}\cdot\mathbf{R}_\mathsf{A} \otimes \mathbf{u}\cdot\mathbf{L}_\mathsf{A}\ket{e}
\ee
The spectrum of this operator lies inside $\l[\frac{1}{2}\frac{1}{J+1},\frac{1}{2} + \frac{1}{2}\frac{J}{J+1}\r]$. In particular, for $J=1/2$, the probabilities lie in $[\frac{1}{3}, \frac{2}{3}]$. However, the action of $\widetilde{F}_N(P_{\mathbf{v}})$ on a state $\ket{\chi}$ still contains off diagonal terms which vanish only at the $N,J\to\infty$ limit.

The commutator between two RRFs for spin projections along two different directions $\mathbf{v}$ and $\mathbf{w}$ reads:
\be\label{eq:commutator SU(2) operator}
    [\widetilde{F}_N(P_\mathbf{v}),\widetilde{F}_N(P_\mathbf{w})] = \frac{1}{N^2J^2(J+1)^2}\,[\mathbf{v}\cdot\mathbf{R}_\mathsf{A},\mathbf{w}\cdot\mathbf{R}_\mathsf{A}]\otimes(\mathbf{L}_\mathsf{A}\cdot\hat{\mathbf{S}}_{\mathsf{a}^N})^2 = \frac{i}{N^2J^2(J+1)^2}(\mathbf{v}\times\mathbf{w})\cdot\mathbf{R}_\mathsf{A}\otimes(\mathbf{L}_\mathsf{A}\cdot\hat{\mathbf{S}}_{\mathsf{a}^N})^2.
\ee
Similarly
\be\label{eq:anticommutator SU(2) operator}
    \{\widetilde{F}_N(P_\mathbf{v}),\widetilde{F}_N(P_\mathbf{w})\} = \frac{2}{N^2J^2(J+1)^2}(\mathbf{v}\cdot\mathbf{w})\,\id_{R_\mathsf{A}}\otimes(\mathbf{L}_\mathsf{A}\cdot\hat{\mathbf{S}}_{\mathsf{a}^N})^2.
\ee
Consider a state of the form $\ket{\chi}= \ket{\phi}\ket{\psi}^{\otimes N}$, where $\ket \phi$ is an arbitrary pure state of the QRF and $\ket{\psi}$ is a qubit state along a direction $\mathbf{u}$. The expectation value of \eqref{eq:commutator SU(2) operator} in state $\ket{\chi}$ reads
\be\label{eq:commutator SU(2) operator infinite N}
   \bra{\chi}[\widetilde{F}_N(P_\mathbf{v}),\widetilde{F}_N(P_\mathbf{w})]\ket{\chi} = \frac{i}{8J^2(J+1)^2}\bra{\phi}(\mathbf{v}\times\mathbf{w})\cdot\mathbf{R}_\mathsf{A}\otimes(\mathbf{u}\cdot\mathbf{L}_\mathsf{A})^2\ket{  \phi}
\ee
and
\be\label{eq:anticommutator SU(2) operator infinite N}
\bra{\chi}\{\widetilde{F}_N(P_\mathbf{v}),\widetilde{F}_N(P_\mathbf{w})\}\ket{\chi} = \frac{1}{4J^2(J+1)^2} (\mathbf{v}\cdot\mathbf{w}) \bra \phi(\mathbf{u}\cdot\mathbf{L}_\mathsf{A})^2\ket{ \phi}.
\ee
Hence, using \eqref{eq:commutator SU(2) operator infinite N}, \eqref{eq:anticommutator SU(2) operator infinite N} and \eqref{eq:expectation value RRF SU(2)} we can compute the Robertson-Schr\"odinger uncertainty relation:
\bea\label{eq:UR SU(2) spin J}
    \Delta^2_\chi\widetilde{F}_N(P_\mathbf{v})\Delta^2_\chi\widetilde{F}_N(P_\mathbf{w})^2 \geq & \,\frac{1}{16}\frac{1}{J^4(J+1)^4}\bra{ \phi}(\mathbf{v}\times\mathbf{w})\cdot\mathbf{R}_\mathsf{A}\otimes(\mathbf{u}\cdot\mathbf{L}_\mathsf{A})^2\ket{\phi}^2 \\
    & + \l(\frac{1}{8}\frac{1}{J^2(J+1)^2}(\mathbf{v}\cdot\mathbf{w})\bra{ \phi}(\mathbf{u}\cdot\mathbf{L}_\mathsf{A})^2\ket{ \phi} - \frac{1}{4}\frac{1}{J(J+1)}\bra{\phi}(\mathbf{v}+\mathbf{w})\cdot\mathbf{R}_\mathsf{A}\otimes\mathbf{u}\cdot\mathbf{L}_\mathsf{A}\ket{ \phi}\r. \\
    & \l. \q\q\q  - \frac{1}{4}\frac{1}{J^2(J+1)^2}\bra{ \phi}\mathbf{v}\cdot\mathbf{R}_\mathsf{A}\otimes\mathbf{u}\cdot\mathbf{L}_\mathsf{A}\ket{\phi}\bra{ \phi}\mathbf{w}\cdot\mathbf{R}_\mathsf{A}\otimes\mathbf{u}\cdot\mathbf{L}_\mathsf{A}\ket{\phi} - \frac{1}{4} \r)^2 \\
    > & \, 0
\eea
For the case where $J=1/2$, the reference frame state takes the form $\ket{\phi}=\sum_{\alpha,\beta}M_{\alpha\beta}\ket{\beta}_R^*\ket{\alpha}_L$. In particular, for $\ket \phi = \ket e $, \eqref{eq:regrep} means that the QRF state is maximally entangled, thus leading to a vanishing contribution from the term in~\eqref{eq:commutator SU(2) operator infinite N}. However, the bound of~\eqref{eq:UR SU(2) spin J} is still not trivial. Specifically, using the $J=1/2$ operators $L_{\mathsf{A},i}=\frac{1}{2}\id\otimes\sigma_i$ and $R_{\mathsf{A},i}=\frac{1}{2}\sigma_i\otimes\id$ and so \eqref{eq:UR SU(2) spin J} we get
\be
   \Delta_{\chi, J=\frac{1}{2}}^2(\widetilde{F}_N(P_\mathbf{v}))\Delta_{\chi, J=\frac{1}{2}}^2(\widetilde{F}_N(P_\mathbf{w})) \geq
    \l(\frac{1}{18}\,\mathbf{v}\cdot\mathbf{w} - \frac{1}{12}\,\mathbf{v}\cdot\mathbf{u'} - \frac{1}{12}\,\mathbf{w}\cdot\mathbf{u'} - \frac{1}{36}(\mathbf{v}\cdot\mathbf{u'})(\mathbf{w}\cdot\mathbf{u'}) - \frac{1}{4}\r)^2,
\ee
where $\mathbf{u'}=(u_x,-u_y,u_z)$.
This bound is always greater or equal than $1/324$, which is achieved when $\mathbf{v}=\mathbf{w}=-\mathbf{u'}$.

\section{Uncertainty relations in the POVM construction}
\label{URPOVM}

\noindent We now show that the POVM construction introduced in
Section~\ref{rrfs} obeys an uncertainty
relation controlled by the same commutator as the Hermitian RRF first-moment
operators.

Let \(P_1\) and \(P_2\) be two projectors on the single-system Hilbert space
\(\mathcal H_{\mathsf a}\). For \(P\in\{P_1,P_2\}\), write the spectral
decomposition of the ordinary relative frequency operator as
\(F_N(P)=\sum_{\nu}\nu Q^{(P)}_{\nu}\), where \(\nu=k/N\) and
\(Q^{(P)}_{\nu}\) are the corresponding spectral projectors on
\(\mathcal H_{\mathsf{a}^N}\). The POVM construction relativises
these projectors according to
\(\widetilde Q^{(P)}_{\nu}:=\mathcal G[E(e)\otimes Q^{(P)}_{\nu}]\). The
family \(\{\widetilde Q^{(P)}_{\nu}\}_{\nu}\) is a POVM on
\(\mathcal H_{\mathsf A}\otimes\mathcal H_{\mathsf{a}^N}\), whose
first moment is precisely the relational relative frequency operator,
\(\widetilde F_N(P)=\sum_{\nu}\nu \widetilde Q^{(P)}_{\nu}
=\mathcal G[E(e)\otimes F_N(P)]\).

The second moment of the actually measured POVM outcomes is
\(\widetilde F^{(2)}_{N,\mathrm{POVM}}(P):=
\sum_{\nu}\nu^2\widetilde Q^{(P)}_{\nu}\). In general,
\(\widetilde F^{(2)}_{N,\mathrm{POVM}}(P)\neq \widetilde F_N(P)^2\), because
the effects \( \widetilde Q^{(P)}_{\nu} \) need not be mutually orthogonal
projectors. Hence, for a state \(\rho\), the operational variance of the POVM
outcome distribution is
\be
    \Delta^2_{\mathrm{POVM},\rho}(\widetilde F_N(P))
    :=
    \mathrm{Tr}\!\left[
        \rho\,\widetilde F^{(2)}_{N,\mathrm{POVM}}(P)
    \right]
    -
    \mathrm{Tr}\!\left[
        \rho\,\widetilde F_N(P)
    \right]^2 .
\ee

To compare this with the variance of the Hermitian first-moment operator
\(\widetilde F_N(P)\), define the intrinsic noise operator
\be
    \mathcal N_N(P)
    :=
    \widetilde F^{(2)}_{N,\mathrm{POVM}}(P)
    -
    \widetilde F_N(P)^2 .
\ee
This operator is positive. Indeed, by Naimark dilation the POVM
\(\{\widetilde Q^{(P)}_{\nu}\}_{\nu}\) can be represented as
\(\widetilde Q^{(P)}_{\nu}=V_P^\dagger \widehat Q^{(P)}_{\nu}V_P\), where
\(V_P\) is an isometry and \(\{\widehat Q^{(P)}_{\nu}\}_{\nu}\) is a PVM on a
larger Hilbert space. If
\(\widehat F_N(P):=\sum_{\nu}\nu\widehat Q^{(P)}_{\nu}\), then
\(\widetilde F_N(P)=V_P^\dagger\widehat F_N(P)V_P\) and
\(\widetilde F^{(2)}_{N,\mathrm{POVM}}(P)
=V_P^\dagger\widehat F_N(P)^2V_P\). Therefore
\(\mathcal N_N(P)=
V_P^\dagger\widehat F_N(P)(\id-V_PV_P^\dagger)\widehat F_N(P)V_P\geq0\), since \(V_PV_P^\dagger\) is an orthogonal projector, and hence
\(\id-V_PV_P^\dagger\) is positive.

It follows that the POVM variance decomposes as~\cite{massar2007uncertainty} 
\begin{equation}
    \Delta^2_{\mathrm{POVM},\rho}(\widetilde F_N(P))
    =
    \Delta^2_{\rho}\!\left(\widetilde F_N(P)\right)
    +
    \mathrm{Tr}\!\left[
        \rho\,\mathcal N_N(P)
    \right],
\end{equation}
where
\(\Delta^2_{\rho}(\widetilde F_N(P))
:=\mathrm{Tr}[\rho\,\widetilde F_N(P)^2]
-\mathrm{Tr}[\rho\,\widetilde F_N(P)]^2\). Since
\(\mathcal N_N(P)\geq0\), the POVM construction can only increase the
variance relative to the corresponding first-moment operator.

Applying the uncertainty relation~\eqref{eq:uncertaintymain} to the Hermitian first-moment
operators \(\widetilde F_N(P_1)\) and \(\widetilde F_N(P_2)\), and using the
positivity of the intrinsic noise terms, we obtain
\be
    \Delta^2_{\mathrm{POVM},\rho}(\widetilde F_N(P_1))
    \Delta^2_{\mathrm{POVM},\rho}(\widetilde F_N(P_2))
    \geq
    \frac14
    \left\vert 
        \mathrm{Tr}\!\left[
            \rho\,
            [\widetilde F_N(P_1),\widetilde F_N(P_2)]
        \right]
    \right\vert ^2 .
\ee
Thus the POVM construction inherits the incompatibility of the relational
relative frequency first moments. The difference from the PVM construction is
that the actually measured POVM variances contain the additional positive
contributions \(\mathrm{Tr}[\rho\,\mathcal N_N(P_1)]\) and
\(\mathrm{Tr}[\rho\,\mathcal N_N(P_2)]\), which quantify the intrinsic unsharpness of
the relativised frequency POVMs.

\section{Relativised frequency POVMs admit a local hidden-variable description}
\label{sec:povm-outcome-statistics-classical-simulation}
\label{app:povmhiddenvariable}
\noindent We now discuss the statistics of the relativised frequency POVM
introduced in Sec.~\ref{rrfs}. The key point is that this POVM is itself a
classical mixture of ordinary sharp-reference-frame frequency measurements. A
group element $g$ is sampled according to the probability distribution arising from the POVM associated with the finite reference frame, and,
conditioned on $g$, one performs the ordinary sharp-frame frequency
measurement rotated by $g$. In the large-$N$ relative frequency limit, this
leads to a local hidden-variable description of the Bell-type statistics of the
POVM construction. In the following we prove these results assuming a separable state between reference frames and systems. 

Let the finite reference frame be prepared in a state $\rho_\mathsf{A}$. The
covariant POVM $\{E(g)\}_{g\in G}$, normalized as
$\int_G d g\,E(g)=\id_\mathsf{A}$, defines the probability density
$w_\mathsf{A}(g):=\operatorname{Tr}(E(g)\rho_\mathsf{A})$, with $w_\mathsf{A}(g)\geq0$ and
$\int_G d g\,w_\mathsf{A}(g)=1$. For fixed $g$, let $Q_\nu^{(P_g)}$ denote the
ordinary sharp-frame frequency POVM for the two-outcome effect 
$P_g=g\triangleright P$, with outcome $\nu=k/N$. Multi-outcome effects can be treated analogously. The corresponding relativised
frequency POVM on the joint system consisting of the reference frame $\mathsf{A}$ and the
$N$ systems $\mathsf{a}$ is
\begin{equation}
    \widetilde Q_\nu^{(P)}
    =
    \int_G d g\,
    E(g)\otimes Q_\nu^{(P_g)}.
    \label{eq:relativised-frequency-povm-effect}
\end{equation}
This is a POVM because each effect is positive and
$\sum_k\widetilde Q_\nu^{(P)}=\id_\mathsf{A}\otimes\id_{\mathsf{a}^N}$.

For the product state $\rho_\mathsf{A}\otimes\rho_{\mathsf{a}^N}$, the probability of obtaining
$K=k$ is therefore
\begin{align}
    \mathrm{Prob}_N(K=k)
    &=
    \operatorname{Tr}\l(
        \widetilde Q_\nu^{(P)}\,
        (\rho_\mathsf{A}\otimes\rho_{\mathsf{a}^N})\r) \nonumber \\
    &=
    \int_G d g\,
    w_\mathsf{A}(g)\,
    \operatorname{Tr}_{\mathsf{a}^N}\!\left(
         Q_\nu^{(P_g)}\rho_{\mathsf{a}^N}
    \right).
    \label{eq:povm-classical-randomisation}
\end{align} Equivalently, after fixing the reference-frame state $\rho_\mathsf{A}$, the induced
POVM on the $N$ systems alone \ecr{is~\cite{loveridge2018symmetry}}
\begin{equation}
        \widetilde Q_{\nu\vert \rho_\mathsf{A}}^{(P)}
    =
    \int_G d g\,
    w_\mathsf{A}(g)\, Q_\nu^{(P_g)}.\label{mixturePOVM}
\end{equation}
Thus the relativised POVM statistics are exactly those of a classical mixture
of ordinary sharp-frame frequency measurements: a classical orientation $g$
is sampled from $w_\mathsf{A}(g)$, and then the sharp-frame frequency measurement
rotated by $g$ is performed.

For a product preparation $\rho_\mathsf{a}^{\otimes N}$, and with
$p(g)=\operatorname{Tr}\l(P_g\rho_\mathsf{a}\r)$, Eq.~\eqref{eq:povm-classical-randomisation}
becomes the binomial mixture
\begin{equation}
    \mathrm{Prob}_N(K=k)
    =
    \binom Nk
    \int_G d g\,
    w_\mathsf{A}(g)\,
    p(g)^k\l(1-p(g)\r)^{N-k}.
    \label{eq:binomial-mixture-relativised-povm}
\end{equation}
Conditioned on $g$, this is the ordinary frequency statistics of $N$
identical trials. Hence, by Theorem \ref{theorem: Finkelstein-Hartle-Herbut}, the relative frequency $f_N=K/N$ conditioned on $g$ converges to the probability given by the Born rule $p(g)$. Equivalently,
\begin{equation}
    \mathrm{Prob}_\infty(f\in df)
    :=
    \lim_{N\to\infty}\mathrm{Prob}_N(f_N\in df)
    =
    \int_G d g\,
    w_\mathsf{A}(g)\,
    \delta\!\left(df-p(g)\right).
    \label{eq:single-system-povm-frequency-limit}
\end{equation}
The limiting distribution $f$ is therefore a classical distribution over
ideal-frame Born probabilities.

We now spell out the bipartite analogue of
Eq.~\eqref{eq:binomial-mixture-relativised-povm}. This is the relevant object
for evaluating Bell-inequality expressions, because one is interested in the joint probability
that Alice obtains a relative frequency $K_\mathsf{A}/N$ and Bob obtains a relative
frequency $K_\mathsf{B}/N$.

Alice and Bob hold finite reference frames $\mathsf{A}$ and $\mathsf{B}$, possibly in an
entangled state $\rho_\mathsf{AB}$. Their local orientation POVMs define the joint
probability density
\be
    w_\mathsf{AB}(g_\mathsf{A},g_\mathsf{B})
    :=
    \operatorname{Tr}_\mathsf{AB}\!\l(
        \bigl(E_\mathsf{A}(g_\mathsf{A})\otimes E_\mathsf{B}(g_\mathsf{B})\bigr)\rho_\mathsf{AB}
    \r),
\ee
with $w_\mathsf{AB}(g_\mathsf{A},g_\mathsf{B})\geq0$ and
$\int_G d g_\mathsf{A}\int_G d g_\mathsf{B}\,w_\mathsf{AB}(g_\mathsf{A},g_\mathsf{B})=1$. Thus the POVM
construction treats the finite reference frames as a classical random
orientation variable $\lambda=(g_\mathsf{A},g_\mathsf{B})$ distributed according to
$w_\mathsf{AB}$.

Assume first that the $2N$ systems are prepared in $N$ identical pairs,
$\rho_{\mathsf{a}^N \mathsf{b}^N}=\rho_\mathsf{ab}^{\otimes N}$, where the elementary pair state
$\rho_\mathsf{ab}$ may be entangled. For fixed frame orientations $(g_\mathsf{A},g_\mathsf{B})$ and
settings $(x,y)$, define
\be
    q_{\alpha\beta}^{xy}(g_\mathsf{A},g_\mathsf{B})
    :=
    \operatorname{Tr}\!\l(
        \bigl(P^{(\alpha)}_{x,g_\mathsf{A}}\otimes Q^{(\beta)}_{y,g_\mathsf{B}}\bigr)
        \rho_\mathsf{ab}
    \r),
    \qquad
    \alpha,\beta\in\{0,1\},
\ee
where $P^{(1)}_{x,g_\mathsf{A}}=g_\mathsf{A}\triangleright P_x$,
$P^{(0)}_{x,g_\mathsf{A}}=\id-g_\mathsf{A}\triangleright P_x$, and analogously
$Q^{(1)}_{y,g_\mathsf{B}}=g_\mathsf{B}\triangleright Q_y$,
$Q^{(0)}_{y,g_\mathsf{B}}=\id-g_\mathsf{B}\triangleright Q_y$. If $\rho_\mathsf{ab}$ is
entangled, the four-outcome distribution $q_{\alpha\beta}^{xy}$ need not
factorise.

Let $N_{00},N_{10},N_{01},N_{11}$ denote the four outcome counts in $N$
repetitions, with $N_{00}+N_{10}+N_{01}+N_{11}=N$. Alice's and Bob's numbers
of $1$-outcomes are $K_\mathsf{A}=N_{10}+N_{11}$ and $K_\mathsf{B}=N_{01}+N_{11}$.
Conditioned on $(g_\mathsf{A},g_\mathsf{B},x,y)$, the four counts are multinomially
distributed. Hence the probability that Alice obtains $K_\mathsf{A}=k_\mathsf{A}$ and Bob
obtains $K_\mathsf{B}=k_\mathsf{B}$ is
\be
    C_N(k_\mathsf{A},k_\mathsf{B}\vert x,y,g_\mathsf{A},g_\mathsf{B}) =
    \sum_{\ell=\ell_{\min}}^{\ell_{\max}}
    \frac{N!}{
        n_{00}!\,n_{10}!\,n_{01}!\,n_{11}!
    }
    (q_{00}^{xy})^{n_{00}}
    (q_{10}^{xy})^{n_{10}}
    (q_{01}^{xy})^{n_{01}}
    (q_{11}^{xy})^{n_{11}},
\ee
where $n_{11}=\ell$, $n_{10}=k_\mathsf{A}-\ell$, $n_{01}=k_\mathsf{B}-\ell$,
$n_{00}=N-k_\mathsf{A}-k_\mathsf{B}+\ell$, and
$\ell_{\min}=\max\{0,k_\mathsf{A}+k_\mathsf{B}-N\}$,
$\ell_{\max}=\min\{k_\mathsf{A},k_\mathsf{B}\}$. The bipartite analogue of
Eq.~\eqref{eq:binomial-mixture-relativised-povm} is therefore
\begin{equation}
    \mathrm{Prob}_N(K_\mathsf{A}=k_\mathsf{A},K_\mathsf{B}=k_\mathsf{B}\vert x,y)  =
    \int_G d g_\mathsf{A}
    \int_G d g_\mathsf{B}\,
    w_\mathsf{AB}(g_\mathsf{A},g_\mathsf{B})\,
    C_N(k_\mathsf{A},k_\mathsf{B}\vert x,y,g_\mathsf{A},g_\mathsf{B}).
\label{eq:bipartite-frequency-mixture}
\end{equation}
This is the exact finite-$N$ quantum probability of the relativised POVM,
written as an average over the classical frame orientations. 

We now take the relative frequency limit $N\to\infty$. Write
$f_\mathsf{A}=K_\mathsf{A}/N$ and $f_\mathsf{B}=K_\mathsf{B}/N$. For fixed $(g_\mathsf{A},g_\mathsf{B},x,y)$, the multinomial
law of large numbers gives
$N_{\alpha\beta}/N\to q_{\alpha\beta}^{xy}(g_\mathsf{A},g_\mathsf{B})$
jointly for all four outcomes. Therefore $f_\mathsf{A}$ and $f_\mathsf{B}$ converge to
the local Born probabilities
\be
\begin{aligned}
    p_\mathsf{A}(x,g_\mathsf{A})
    &:=
    q_{10}^{xy}(g_\mathsf{A},g_\mathsf{B})+q_{11}^{xy}(g_\mathsf{A},g_\mathsf{B})
     =
    \operatorname{Tr}\!\l(
        \bigl((g_\mathsf{A}\triangleright P_x)\otimes\id\bigr)\rho_\mathsf{ab}
    \r),
    \\
    p_\mathsf{B}(y,g_\mathsf{B})
    &:=
    q_{01}^{xy}(g_\mathsf{A},g_\mathsf{B})+q_{11}^{xy}(g_\mathsf{A},g_\mathsf{B})
     =
    \operatorname{Tr}\!\l(
        \bigl(\id\otimes(g_\mathsf{B}\triangleright Q_y)\bigr)\rho_\mathsf{ab}
    \r).
\end{aligned}
\ee
Thus, conditioned on $(g_\mathsf{A},g_\mathsf{B})$, the limiting joint distribution of the two
relative frequencies is the product of two delta measures. Substituting this
limit into Eq.~\eqref{eq:bipartite-frequency-mixture} gives
\begin{equation}
\begin{aligned}
    \mathrm{Prob}_\infty(df_\mathsf{A},df_\mathsf{B}\vert x,y)  & :=
    \lim_{N\to\infty}
    \mathrm{Prob}_N(df_\mathsf{A},df_\mathsf{B}\vert x,y)  \\
    &\quad =
    \int_G d g_\mathsf{A}
    \int_G d g_\mathsf{B}\,
    w_\mathsf{AB}(g_\mathsf{A},g_\mathsf{B})\,
    \delta\!\left(df_\mathsf{A}-p_\mathsf{A}(x,g_\mathsf{A})\right)
    \delta\!\left(df_\mathsf{B}-p_\mathsf{B}(y,g_\mathsf{B})\right).
\end{aligned}
\label{eq:rrf-povm-lhv-frequency-limit}
\end{equation}
Equation~\eqref{eq:rrf-povm-lhv-frequency-limit} is therefore the
$N\to\infty$ limit of the quantum probabilities of the relativised POVM for
the joint event that Alice obtains a relative frequency $f_\mathsf{A}$ and Bob obtains
a relative frequency $f_\mathsf{B}$.

The crucial observation is that, conditioned on the classical variable
$(g_\mathsf{A},g_\mathsf{B})$, Alice's limiting relative frequency outcome depends only on her
local setting $x$, and Bob's limiting relative frequency outcome depends only
on his local setting $y$. Equation~\eqref{eq:rrf-povm-lhv-frequency-limit}
therefore has precisely the form required by Bell's local-causality condition:
all correlations between Alice's and Bob's limiting frequency outcomes arise
only through the common hidden variable $(g_\mathsf{A},g_\mathsf{B})$ describing the finite
reference frames. Thus, in the infinite-repetition limit, the relativised
frequency POVM statistics admit a local hidden-variable model.

The same conclusion holds for exchangeable preparations that admit a de
Finetti representation, Eq.~\eqref{eq:de Finetti}, of the form
$\rho_{\mathsf{a}^N \mathsf{b}^N} =\int d \sigma\,\sigma_\mathsf{ab}^{\otimes N}$. For each value of the
classical parameter $\sigma$, the preceding argument applies with the
elementary pair state $\rho_\mathsf{ab}$ replaced by $\sigma_\mathsf{ab}$, and with the
expanded hidden variable $\lambda=(\sigma,g_\mathsf{A},g_\mathsf{B})$.

\section{Bell inequalities}
\subsection{For relational relative frequencies}
\label{appendix:bell-inequality}
\label{app:relationalrelfrequancies}

In Appendix \ref{app:UR SU(2)}, the RRF operators were shown in general to not commute for the case of an $\SU(2)$ reference frame for any $N$. We next show that, the incompatibility of local RRF observables is sufficient to obtain Bell-inequality violations for reference frames associated with $\SU(2)$. Consider again Alice, Bob, and their respective reference frames and relational relative frequency operators. Let us define the $\pm1$-valued observable
\begin{equation}\label{eq: dichotomic observable general}
    \mathcal{A}_N(P_\mathbf{v}) := \sgn(\widetilde{F}_N(P_\mathbf{v}) - \frac{1}{2}\id) = \sgn{\l((\mathbf{v}\cdot \mathbf{R}_{\mathsf A})\otimes{\mathbf{L}_{\mathsf A}} \cdot {\mathbf{S}}_{\mathsf a^N}\r)},
\end{equation}
which specifies whether the observed relational relative frequency is smaller than $\frac{1}{2}$ or not. The dichotomic observable $\mathcal{A}_N(P_\mathbf{w})$ on Bob's side is similarly defined.

Consider the case $J=1/2$, where the expression~\eqref{eq: dichotomic observable general} reduces to
\bea
    \mathcal{A}_N^{J=\frac{1}{2}}(P_\mathbf{v}) & = \sgn{\l((\mathbf{v}\cdot \mathbf{\sigma}_{\mathsf A}^R) \otimes\mathbf{\sigma}_{\mathsf A}^L \cdot \mathbf{S}_{\mathsf a^N}\r)} \\
    \mathcal{B}_N^{J=\frac{1}{2}}(P_\mathbf{w}) & = \sgn{\l((\mathbf{w}\cdot \mathbf{\sigma}_{\mathsf B}^R) \otimes\mathbf{\sigma}_{\mathsf B}^L \cdot \mathbf{S}_{\mathsf b^N}\r)}
\eea
The total state for our scenario is the following:
\begin{equation}
    \ket{\Psi} = \ket{\Phi}_{\mathsf{AB}}\otimes\ket{\psi}_{\mathsf{a}^N}\otimes\ket{\psi}_{\mathsf{b}^N},
\end{equation}
where $\ket{\Phi}_\mathsf{AB}$ is an entangled state between Alice's and Bob's reference frames of the form 
\begin{equation}
    \ket{\Phi}_\mathsf{AB}=\ket{\psi^-}_{R_\mathsf{AB}}\ket{\mathbf{\omega},+}_{L_\mathsf{A}}\ket{\mathbf{\omega},+}_{L_\mathsf{B}},
\end{equation} 
and $\ket{\psi}_{\mathsf a}$ is a qubit aligned along direction $\mathbf u$. Here the left sectors are in a product state but the right ones are prepared in the singlet state $\ket{\psi^-}$. The correlation function reads
\bea
    E(x,y) & = \bra{\Phi}\sgn{\l(\mathbf{v}_x\cdot \mathbf{\sigma}_{\mathsf A}^R \otimes\mathbf{u}\cdot \mathbf{\sigma}_{\mathsf A}^L\r)}\otimes\sgn{\l(\mathbf{w}_y\cdot \mathbf{\sigma}_{\mathsf B}^R \otimes\mathbf{u}\cdot \mathbf{\sigma}_{\mathsf B}^L\r)}\ket{\Phi}_\mathsf{AB} \\
    & = \bra{\psi^-}\sgn{\l(\mathbf{v}_x\cdot \mathbf{\sigma}_{\mathsf A}^R\r)}\otimes\sgn{\l(\mathbf{w}_y\cdot \mathbf{\sigma}_{\mathsf B}^R\r)}\ket{\psi^-}_{R_\mathsf{AB}} \\
    & = -\mathrm{v}_x\cdot\mathrm{w}_y,
\eea
This is independent of $N$ and so in particular holds for $N\to\infty$.
Thus the Bell experiment implemented by sign-binned RRFs is exactly the usual qubit singlet
experiment on the left/right sectors. One can then choose appropriate settings $\mathrm{v}_x$ and $\mathrm{w}_y$ with $x,y\in\{0,1\}$ in order to obtain the Tsirelson violation bound $2\sqrt{2}$.
\subsection{For relational expectation values}
\label{app:proofofbell}

Let $\mathsf{A}$ be a QRF for the Heisenberg-Weyl group -- a particle on a line -- with dimensionless position and momentum operators $X,P$, $[X,P]=i$, and let $\mathsf a_1,\dots,\mathsf a_N$ be $N$ systems with mutually isomorphic Hilbert spaces, which we identify with a common $\cH_{\mathsf{a}}$. The joint Hilbert space is $\mathcal H_{\mathsf A}\otimes\mathcal H_{\mathsf a^N}$, where $\mathcal H_{ \mathsf a^N} = \mathcal H_{\mathsf a}^{\otimes N}$.
We know from Appendix~\ref{app:uncertaintyrelations} that the relational expectation value operator on $\mathcal H_{\mathsf A}\otimes\mathcal H_{\mathsf a^N}$ is given by
\begin{align}
\widetilde F_N(X) =& \int \frac{\mathrm{d}\alpha}{\pi} \, \ketbra{\alpha}{\alpha} \otimes D_\alpha^{\otimes N} F_N (X)  D_\alpha^{\dagger \otimes N} = \id \otimes F_N(X) - X \otimes \id\\
\widetilde F_N(P) =& \int \frac{\mathrm{d}\alpha}{\pi} \, \ketbra{\alpha}{\alpha} \otimes D_\alpha^{\otimes N} F_N (P)  D_\alpha^{\dagger \otimes N} = \id \otimes F_N(P) - P \otimes \id.
\end{align}

Because both $F_N$ and $\widetilde F_N$ are linear in their operator argument, for an arbitrary quadrature $X_\vartheta\;=\;\cos\vartheta  X + \sin\vartheta P$
we have
\begin{equation}
\widetilde F_N(X_\vartheta)\; = \; \int \frac{\mathrm{d}\alpha}{\pi} \, \ketbra{\alpha}{\alpha} \otimes D_\alpha^{\otimes N} F_N (X_\vartheta)  D_\alpha^{\dagger \otimes N} \;=\; \id \otimes F_N(X_\vartheta)\; - \; X_\vartheta \otimes \id,
\label{eq:relquad}
\end{equation}
where
\begin{equation}
F_N(X_\vartheta) = \frac{1}{N} \sum _k X_{\vartheta, k}, 
\end{equation}
with $X_{\vartheta, k}$ acting nontrivially only on $\mathcal H_{\mathsf a_k}$.

Let $\ket{q}_\vartheta$ denote the ($\delta$-normalised) eigenstate of $X_\vartheta$ with eigenvalue $q$. The state $\ket{q}_\vartheta$ can be obtained by rotating the position eigenstate $\ket{q}$, $X \ket{q} = q \ket q$, in phase-space by the angle $\vartheta$: $\ket{q}_\vartheta=e^{-i\vartheta\hat n}\ket{q}$, where $\hat n$ is the number operator. In this way, the vectors $\ket{q}_\vartheta$ satisfy 
\begin{equation}
\bra{q}q'\rangle_{\vartheta}=\delta(q-q'),\qquad \int \mathrm d q\,\ket{q}\!\bra{q}_\vartheta = \id .
\end{equation}
On $\mathsf a^N$, define the product
\begin{equation}
\ket{\vec q \, }_\vartheta\;:=\;\bigotimes_{k=1}^{N}\ket{q_k}_\vartheta,\qquad
\vec q  \,=(q_1,\dots,q_N)\in\mathbb R^N,
\end{equation}
which satisfies
\begin{equation}
X_{\vartheta,k}\ket{\vec q  \,}_\vartheta=q_k\ket{\vec q  \,}_\vartheta,\qquad
F_N(X_\vartheta)\ket{\vec q  \,}_\vartheta= \bar q \ket{\vec q  \,}_\vartheta,
\end{equation}
where $\bar q := (1/N)\sum_{k=1}^{N}q_k$. Both terms in Eq.~\eqref{eq:relquad} are then diagonal on the product state $\ket{q}_\vartheta\otimes\ket{\vec q  \,}_\vartheta$, and equation \eqref{eq:eigenvalue} holds
\begin{equation*}
    \widetilde F_N(X_\vartheta)\,\bigl(\ket{q}_\vartheta\otimes\ket{\vec q  \,}_\vartheta\bigr)\;=\;
    (\bar q-q)\,\ket{q}_\vartheta\otimes\ket{\vec q  \,}_\vartheta .
\end{equation*}
The eigenvalue is the mean quadrature (in direction $\vartheta$) of the $N$-copies of system $\mathsf{a}$ relative to the QRF $\mathsf{A}$. Note that the operator from in Eq.~\eqref{eq:eigenvalue} is different from the mean relative quadrature of a single system, averaged over the $N$ systems. The eigenstate of that operator is one in which $\mathsf{A}$ and every system $\mathsf{a}$ are simultaneously in sharp $X_\vartheta$-eigenstates, so every individual relative quadrature $q_k-q$ is well defined, and so is any linear functional of them, in particular the mean.

We then have the spectral resolution
\begin{equation}
\widetilde F_N(X_\vartheta)\;=\;
\int_{\mathbb R}\!dq\!\int_{\mathbb R^N}\!d^N\vec q  \,\;
\Big(\bar q- q \Big)\;
\ket{q}\!\bra{q}_\vartheta\otimes\ket{\vec q  \,}\!\bra{\vec q  \,}_\vartheta.
\label{eq:spectral}
\end{equation}
The spectrum is $\mathbb R$, continuous and infinitely degenerate (the degeneracy reflects the freedom to redistribute a fixed value of the mean relative quadrature among the $N+1$ individual quadratures).

From the spectral resolution~\eqref{eq:spectral} we define the bounded, dichotomic ``sign'' observable
\begin{equation}
\sgn\!\bigl(\widetilde F_N(X_\vartheta)\bigr)\;:=\;
\int dq\,d^N\vec q  \,\;\sgn\!\Big(\bar q-q\Big)\;
\ket{q}\!\bra{q}_\vartheta\otimes\ket{\vec q  \,}\!\bra{\vec q  \,}_\vartheta,
\label{eq:sgn}
\end{equation}
where the sign function $\mathrm{sgn}$ is defined by $\mathrm{sgn}(x) = 1$ if $x > 0$ and $\mathrm{sgn}(x) = -1$ otherwise. Note that this implies that the observable $\sgn\!\bigl(\widetilde F_N(X_\vartheta)\bigr)$ has eigenvalues $\pm 1$. 

Now consider an analogous construction on Bob's side -- a QRF $\mathsf{B}$ with its own $N$ copies of system $\mathsf b$, $\mathsf b_1,\dots,\mathsf b_N$ -- and let Alice and Bob each pick one of two measurement angles, $\vartheta_0,\vartheta_1$ for Alice and $\varphi_0,\varphi_1$ for Bob. Define
\begin{equation}
E(\vartheta_{ a},\varphi_{ b})\;:=\;
\sgn\!\bigl(\widetilde F_N(X_{\vartheta_{\mathsf a}})\bigr)\,\otimes\,
\sgn\!\bigl(\widetilde F_N(X_{\varphi_{ b}})\bigr),
\end{equation}
and the CHSH--type correlator
\begin{equation}\label{eq:CHSHcorrelator}
 C\;:=\;\sum_{ a, b\,=\,0}^{1}(-1)^{ab}\,E(\vartheta_{a},\varphi_{ b}).
\end{equation}
Consider the state
\begin{equation}
\ket{\Psi}\;=\;\ket{\Phi}_{\mathsf A\mathsf B}\otimes\ket{\gamma}_{\mathsf a}^{\otimes N}\otimes\ket{\gamma}_{\mathsf b}^{\otimes N},
\end{equation}
where $\ket{\Phi}_{\mathsf  A \mathsf B}$ is an arbitrary bipartite state on $\mathcal H_{\mathsf A}\otimes\mathcal H_{\mathsf B}$ and each copy of $\mathsf{a}$ and $\mathsf{b}$ is in the state $\ket{\gamma}$ (taken the same on Alice's and Bob's sides for simplicity). Contracting Eq.~\eqref{eq:sgn} for Alice's and Bob's sign operators with the states on $\mathsf A$, $\mathsf B$, $\mathsf a^N,$ and $ \mathsf b^N$ gives
\begin{equation}
    \bra{\Psi}E(\vartheta_a,\varphi_b)\ket{\Psi}\;=\;
    \int dq\,dq'\;\mathcal S_N^{(\vartheta_a)}(q)\,\mathcal S_N^{(\varphi_b)}(q')\;
    \bra{\Phi}\bigl(\ket{q}\!\bra{q}_{\vartheta_a}\otimes\ket{q'}\!\bra{q'}_{\varphi_b}\bigr)\ket{\Phi}_{\mathsf  A \mathsf B},
\end{equation}
where 
\begin{equation}
S_N^{(\vartheta)}(q)\;:=\;\int_{\mathbb R^N}\!d^N\vec q\;
\prod_{k=1}^{N}\!\Big\vert  \bra{\gamma}q_k\rangle_\vartheta \Big\vert ^2\;
\sgn\Big(\bar q-q\Big).
\label{eq:Sdef}
\end{equation}
The squared product in Eq.~\eqref{eq:Sdef} is the joint probability density of $N$ i.i.d. systems, where the $k$-th one is drawn from the probability distribution $p_\vartheta(q_k):=\vert \bra{\gamma}q_k\rangle_{\vartheta}\vert ^2$.

By the strong law of large numbers,
\begin{equation}
\bar q = \frac{1}{N}\sum_{k=1}^{N}q_k\;\xrightarrow[\,N\to\infty] \;\mu_\vartheta\;:=\;\int dq\,q\,p_\vartheta(q)
\;=\;\bra{\gamma}X_\vartheta\ket{\gamma}
\end{equation}
Hence,
\begin{equation}
\mathcal S_N^{(\vartheta)}(q)\;\xrightarrow[\,N\to\infty\,]\;\sgn(\mu_\vartheta-q).
\end{equation}
In the case where $\ket{\gamma}$ is the vacuum state one has $\mu_\vartheta=0$ for every $\vartheta$ and
\begin{equation}
\sgn\!\Big(\frac{1}{N}\sum_k q_k-q\Big)\;\xrightarrow[\,N\to\infty\,]\;\sgn(-q)\;=\;-\sgn(q).
\end{equation}
At the operator level, for large $N$ restricted to states of the assumed product form, the relational sign~\eqref{eq:sgn} reduces to (minus) the absolute sign on the QRF:
\begin{equation}
\sgn\!\bigl(\widetilde F_N(X_\vartheta)\bigr) + \,\sgn(X_\vartheta)_{\mathsf A}\otimes\id_{ \mathsf a^N}\;\xrightarrow[\,N \to \infty \,]\; 0.
\end{equation}
Substituting this on both sides of $E(\vartheta_{\mathsf a},\varphi_{\mathsf b})$, the two minus signs multiply to $+1$ and we obtain
\begin{equation}
\lim_{N\to\infty}\bra{\Psi}E(\vartheta_{a},\varphi_{ b})\ket{\Psi}\;=\;
\bra{\Phi}\sgn(X_{\vartheta_{a}})_{\mathsf A}\otimes\sgn(X_{\varphi_{ b}})_{\mathsf B}\ket{\Phi}_{\mathsf  A \mathsf B}.
\label{eq:reduction}
\end{equation}
Summing over the four settings, we have
\begin{equation}
\lim_{N\to\infty}\langle  C \rangle_\Psi\;=\;\sum_{ a, b=0}^{1}(-1)^{ab}\,
\bra{\Phi}\sgn(X_{\vartheta_{ a}})_{\mathsf A}\otimes\sgn(X_{\varphi_{ b}})_{\mathsf B}\ket{\Phi}_{\mathsf  A \mathsf B}.
\end{equation}
We therefore see that the CHSH--type correlator~\eqref{eq:CHSHcorrelator}, built from REVs, reduces, in the large-$N$ limit, to the standard continuous--variable CHSH expression for quadrature--dichotomized observables on the bipartite state $\ket{\Phi}_{\mathsf  A \mathsf B}$.

It is well known that there exist angles $\vartheta_0$, $\vartheta_1$, $\varphi_0$ and $\varphi_1$ and states $\ket{\Phi}_{\mathsf  A \mathsf B}$ such that 
\begin{equation}
    \langle C \rangle_\Phi> 2.
\end{equation}
For instance, in Ref.~\cite{garcia2004proposal} it is shown that for $\vartheta_0 =0$, $\vartheta_1 =\pi/2$, $\varphi_0 = -\pi/4$ and $\varphi_1 = \pi/4$ a violation of $C \approx 2.048$ can be achieved for the ``degaussified'' two-mode squeezed vacuum state 
\begin{equation}\label{eq:two-mode squeezed state}
\ket{\Phi} \propto \sum_{k =0}^\infty (k + 1) r^k\, \ket{k}_{\mathsf A}\otimes \ket{k}_{\mathsf B}
\end{equation}
where $\ket{k}$ is the Fock basis and $r \approx 0.572$. 

We thus conclude that REVs lead to the violation of Bell inequality, and therefore their irreducible uncertainty in the large $N$ limit is inconsistent with every local hidden variable model.    

\section{Quantum-optical proposal for a PVM}
\label{app:pvm-quantum-optics}

\noindent We now describe a continuous-variable quantum-optical implementation of the
PVM construction introduced in Sec.~\ref{rrfs}. In that construction one
measures the spectral projectors of the Hermitian relational observable,
as in Eq.~\eqref{eq:rrf-pvm-spectral-decomposition}. The present proposal
specializes this idea to the Heisenberg--Weyl setting discussed in the
main text and in Appendix~\ref{app:uncertaintyrelations}, where the
relevant relational observables are mean quadratures of a block of \(N\)
systems measured relative to a finite quantum reference frame. The
experimental ingredients are standard in pulsed continuous-variable optics:
balanced homodyne readout of optical quadratures~\cite{LvovskyRaymer2009,HansenEtAl2001,AppelEtAl2007},
coherent linear-optical transformations on many modes~\cite{ReckZeilingerBernsteinBertani1994,ClementsEtAl2016},
and temporal-mode control of pulsed light~\cite{BrechtReddySilberhornRaymer2015,AnsariEtAl2018,SerinoEtAl2023}. The experimental implementation of the PVM construction described here is an essential building block of the Bell experiment for REVs presented in the main text.

Let the finite reference frame be represented by one optical mode
\( r\), prepared in a state \(\rho_\mathsf{ A}\) with finite coherent
amplitude. For example, for a coherent reference state
\(\rho_\mathsf{A}=\ket{\beta}\!\bra{\beta}\), one has
\(r\ket{\beta}=\beta\ket{\beta}\) with finite
\(\vert \beta\vert <\infty\). Its
quadratures are
\be
     X_\mathsf{A}
    =
    \frac{ r+ r^\dagger}{\sqrt2},
    \qquad
    P_\mathsf{A}
    =
    \frac{ r- r^\dagger}{i\sqrt2}.
\ee 
A general quadrature of the reference mode is then given by
\be
     X_\mathsf{A}( \vartheta)
    =
     X_\mathsf{A}\cos\vartheta
    +
     P_\mathsf{A}\sin\vartheta .
\ee
The \(N\) signal pulses are described by modes
\( a_1,\ldots, a_N\), with quadratures
\( X_{\mathsf a_i}\),  \(P_{\mathsf a_i}\) and \(X_\mathsf{a_i}(\vartheta)
    =
     X_\mathsf{a_i}\cos\vartheta
    +
     P_\mathsf{a_i}\sin\vartheta. \) The relational block
observables to be measured is
\begin{equation}
    X_\mathsf{rel}^{(N)}(\vartheta)
    =
    \frac1N\sum_{i=1}^N X_{\mathsf a_i} (\vartheta)
    -
    X_\mathsf{A}(\vartheta).
    \label{eq:pvm-optics-relational-block-observables}
\end{equation}

The first experimental step is to coherently combine the \(N\) signal pulses
into the normalized collective mode
\begin{equation}
    c   :=
    \frac1{\sqrt N}\sum_{i=1}^N \mathsf{a}_i .
    \label{eq:pvm-optics-collective-signal-mode}
\end{equation}
This can be implemented by a passive linear-optical unitary acting on the
\(N\) input modes and producing \(N\) output modes, chosen such that one output mode is
the equal superposition of all \(N\) input modes. In spatial modes this is an \(N\)-port
interferometer, for example decomposed into beamsplitters and phase shifters (see Fig.~\ref{fig:pvm-experimental-sequence-main}).
In a pulsed implementation it is the analogous temporal-mode unitary, which
maps the equally weighted temporal superposition of the \(N\) pulses onto one
output temporal mode. (The remaining \(N-1\) orthogonal modes are irrelevant
for the measurement.) Then
\be
    \frac1N\sum_{i=1}^N  X_{\mathsf a_i} (\vartheta)
    =
    \frac{X_c(\vartheta)}{\sqrt N}.
\ee

The collective signal mode \( c\) is then mixed with the finite reference
mode \( r\) on an unbalanced beamsplitter. We choose the beamsplitter
such that one output mode is
\begin{equation}
    d
    =
    \frac1{\sqrt{N+1}} c
    -
    \sqrt{\frac{N}{N+1}}\,r .
    \label{eq:pvm-optics-output-mode}
\end{equation}
This is a normalized optical mode. A balanced homodyne detector applied to the output mode \( d\) measures
the quadrature
\be
   X_d(\vartheta)
    =
    X_d\cos\vartheta+\hat P_d\sin\vartheta.
\ee
With \(s_N:=\sqrt{1+1/N}\) they satisfy
\be
    X_\mathsf{rel}^{(N)} (\vartheta) =s_N\hat X_d (\vartheta).
\ee

In the ideal homodyne limit, the detector realizes the quadrature PVM
\(\{\id_\Delta(\hat X_d(\vartheta))\}_\Delta\). Therefore,  for
every Borel set \(\Delta\subset\mathbb R\),  the optical circuit
realizes
\be
    \id_\Delta\!\left(\hat X_\mathsf{rel}^{(N)}(\vartheta)\right)
    =
    \id_{\Delta/s_N}\!\left(\hat X_d(\vartheta)\right), \mbox{ where } 
\Delta/s_N
    :=
    \{x\in\mathbb R\,:\,s_Nx\in\Delta\},
\ee
so the measured distribution is exactly the distribution required by the
PVM construction, up to the known rescaling \(s_N\). In the limit of infinitely large blocks, one has $s_N \rightarrow 1$.

The experimental sequence is therefore simple. For each block, prepare a
fresh finite reference pulse and \(N\) fresh signal pulses. Coherently map the
signal pulses to the collective mode \( c\), mix \( c\) with the finite
reference mode \( r\) on the unbalanced beamsplitter
\eqref{eq:pvm-optics-output-mode}, homodyne the output mode \( d\) and rescale the result. Repeating the procedure yields the block-outcome distribution of the
relational quadrature. 

The implementation of the Bell experiment based on the local measurements proposed here is given in the main text.

\end{document}

%% file: Preamble.tex
\usepackage{amsfonts,amsthm,amsmath, bbm}
\usepackage{nicefrac}
\usepackage{ragged2e}

\usepackage{dsfont}
\usepackage{verbatim}
\usepackage{stmaryrd}
\usepackage[normalem]{ulem}

\usepackage{wrapfig}
\usepackage{xcolor}
\usepackage{graphicx}

\definecolor{myred}{RGB}{220, 44, 61}
\definecolor{myyellow}{RGB}{230, 161, 0}
\definecolor{mygreen}{RGB}{0, 173, 107}
\definecolor{mygray}{RGB}{184, 184, 184}
\definecolor{mypurple}{RGB}{110, 35, 135}
\definecolor{color1}{RGB}{160,100,150}
\definecolor{color2}{RGB}{115,56,106}

\definecolor{myblue}{RGB}{12, 75, 190}
\definecolor{navyblue}{RGB}{20,60,120}
\definecolor{myorange}{RGB}{242,185,80}
\definecolor{myred}{RGB}{230,110,90}
\definecolor{mintgreen}{RGB}{90,220,210}
\definecolor{myplum}{RGB}{90,40,90}
\definecolor{terracota}{RGB}{200,110,80}
\definecolor{sienna}{RGB}{180,85,60}

\newtheorem{theorem}{Theorem}

\usepackage[colorlinks=true]{hyperref}

\hypersetup{
    linkcolor   = myblue,
    citecolor   = myred,
    urlcolor    = myplum,
}

\newcommand{\cG}{{\mathcal G}}
\newcommand{\cH}{{\mathcal H}}

\newcommand{\SU}{\mathrm{SU}}

\newcommand{\be}{\begin{equation}}
\newcommand{\ee}{\end{equation}}
\newcommand{\beq}{\begin{eqnarray}}
\newcommand{\eeq}{\end{eqnarray}}
\newcommand{\bes}{\begin{eqnarray}}
\newcommand{\ees}{\end{eqnarray}}

\newcommand{\q}{\quad}

\newcommand{\ecr}[1]{%
\noindent\textcolor{blue}{ #1}}

\newcommand{\sgn}{\mathrm{sgn}}
\newcommand{\tr}{{\mathrm{Tr}}}

\newcommand{\id}{\mathbb{I}}

\def\rd{\textrm{d}}

\def\be{\begin{equation}}
\def\bes{\begin{equation*}}
\def\ee{\end{equation}}
\def\ees{\end{equation*}}
\def\ba{\begin{aligned}}
\def\ea{\end{aligned}}
\def\bea{\be\ba}

\def\eea{\ea\ee}

\def\l{\left}
\def\r{\right}
\def\ls{\l\vert \l\vert }
\def\rs{\r\vert \r\vert }
\def\btj{\l(\begin{array}{ccc}}
\def\etj{\end{array}\r)}

\def\bsj{\l\{\begin{array}{ccc}}
\def\esj{\end{array}\r\}}

\def\bmat{\l(\begin{array}}
\def\emat{\end{array}\r)}

\newcommand{\ket}[1]{\left\vert  #1 \right\rangle}
\newcommand{\bra}[1]{\left\langle #1 \right\vert }

\newcommand{\ketbra}[2]{\left\vert  #1 \middle\rangle\!\middle\langle #2 \right\vert }

\makeatletter
\let\l@subsubsection\@gobbletwo
\makeatother